\DeclarePairedDelimiter{\floor}{\lfloor}{\rfloor}
\algnewcommand\algorithmicforeach{\textbf{for each}}
\newtheorem{theorem}{Theorem}
\newcommand{\argmin}{\operatornamewithlimits{argmin}}
\newcommand{\argmax}{\operatornamewithlimits{argmax}}
\journal{Astronomy and Computing}
\begin{document}

\begin{frontmatter}

%% Title, authors and addresses

%% use the tnoteref command within \title for footnotes;
%% use the tnotetext command for theassociated footnote;
%% use the fnref command within \author or \address for footnotes;
%% use the fntext command for theassociated footnote;
%% use the corref command within \author for corresponding author footnotes;
%% use the cortext command for theassociated footnote;
%% use the ead command for the email address,
%% and the form \ead[url] for the home page:
%% \title{Title\tnoteref{label1}}
%% \tnotetext[label1]{}
%% \author{Name\corref{cor1}\fnref{label2}}
%% \ead{email address}
%% \ead[url]{home page}
%% \fntext[label2]{}
%% \cortext[cor1]{}
%% \address{Address\fnref{label3}}
%% \fntext[label3]{}

\title{Unsupervised Learning of Structure in Spectroscopic Cubes}

%% use optional labels to link authors explicitly to addresses:
%% \author[label1,label2]{}
%% \address[label1]{}
%% \address[label2]{}

\author[inst1]{Mauricio Araya\corref{cor1}}
\author[inst1]{Marcelo Mendoza}
\author[inst1]{Mauricio Solar}
\author[inst2]{Diego Mardones}
\author[inst3]{Amelia Bayo}
\address[inst1]{Departamento de Inform\'atica, Universidad T\'ecnica Federico Santa Mar\'ia, Casilla 110-V, Valpara\'iso, Chile}
\address[inst2]{Departamento de Astronom\'ia, Universidad de Chile, Casilla 36-D, Santiago, Chile}
\address[inst3]{Instituto de F\'isica y Astronom\'ia, Universidad de Valpara\'iso, Casilla 5030, Valpara\'iso , Chile}
\cortext[cor1]{mauricio.araya@usm.cl}

\begin{abstract}
We consider the problem of analyzing the structure of spectroscopic cubes 
using unsupervised machine learning techniques. We propose representing the target's signal as an
homogeneous set of volumes through an iterative algorithm that separates the 
structured emission from the background while not overestimating the flux. 
Besides verifying some basic theoretical properties, the algorithm is designed to 
be tuned by domain experts, because its parameters have meaningful values in the 
astronomical context. Nevertheless, we propose an heuristic to automatically estimate
the signal-to-noise ratio parameter of the algorithm directly from data. 
The resulting light-weighted set of samples ($\leq 1\%$ compared to the original data) offer several advantages. For instance, 
it is statistically correct and computationally inexpensive to apply well-established techniques of the pattern 
recognition and machine learning domains; such as clustering and dimensionality reduction algorithms. 
We use ALMA science verification data to validate our method, and present examples of the operations 
that can be performed by using the proposed representation. Even though this approach is focused on providing faster and 
better analysis tools for the end-user astronomer, it also opens the possibility of 
content-aware data discovery by applying our algorithm to big data.

\end{abstract}

\begin{keyword}
%% keywords here, in the form: keyword \sep keyword
Astronomical Imaging \sep Image Analysis \sep Homogeneous Representations \sep Machine Learning  
%% PACS codes here, in the form: \PACS code \sep code

%% MSC codes here, in the form: \MSC code \sep code
%% or \MSC[2008] code \sep code (2000 is the default)

\end{keyword}

\end{frontmatter}

%% \linenumbers

\section{Introduction}

Even though there is a large body of work regarding 2D image ana\-ly\-sis, 
most of the techniques do not directly scale up to more dimensions.
%Consequently, new and efficient representations,
%algorithms and infrastructure are needed to cope with 3D.
There are specific-purpose algorithms in astronomy to deal with 3D data,
such as clump finding algorithms for spectroscopic data cubes 
\citep{williams94,stutzki90,berry15},
yet the current state of the practice requires a huge effort
in terms of storage space, computational time and actually human-machine
interaction to generate useful products for astronomers \citep{mcmullin07}. 
Moreover, data growth in sensitivity and resolution with each new instrument, 
so the next-generation of projects in Astronomy will produce
several terabytes of data every night \citep{ivezic08,dewdney08}, making 
impossible to perform analysis without automatically reducing its dimensionality. 

Machine learning, and other advanced statistical methods, have
been a source of success for astronomers \citep{astroML,richards11,gibson12}:
learning and inference are powerful
tools to represent data in a compact way that allows us to make automatic
decisions. Machine learning methods for classification, model-based regression, clustering and feature selection \citep{bishop06}, often rely on samples being independent and identically distributed
(i.i.d.), which is not the case for the pixels of an image. Therefore, applying
state-of-the-art machine learning techniques usually involves adapting the
method or preprocessing the data to comply with this assumption.

We propose representing spectroscopic data cubes as a compact and homogeneous set of
volumes, that can be treated directly as samples of the underlying 
signal of interest to achieve two goals:

\begin{itemize}
\item reduce the size of the cube representation to limit computational
and memory resources needed to perform astronomical analysis, and
\item comply with the i.i.d. assumption allowing astronomers 
to use machine learning and statistical analysis tools that are based on this assumption.
\end{itemize} 

In this article we focus on spectroscopic data cubes, and specifically on interferometric synthesized spectral cubes on the millimeter/sub-millimeter range, but this work could be straightforwardly applied to 2D images. Moreover, new acquisition techniques are now producing even higher dimensional data with axes
such as polarization, spatial depth or time series. For instance, the data that is
currently generated by the ALMA observatory \citep{testi10} are actually
4D data hypercubes. Again, our algorithm can work with these higher dimensions without major changes.

The outline of the article is the following. 
In Section~\ref{sec:soa} we describe the problem of analyzing
spectroscopic cubes by reviewing current approaches to do this.
Section~\ref{sec:hcr} presents an homogeneous compact representation for spectroscopic cubes, while Section~\ref{sec:emp} presents the experimental results of computing this representation. In Section~\ref{sec:apps} we show how to use our representation for data analysis. Finally, we conclude in Section~\ref{sec:conc} giving remarks and discussing future work.

\section{From 2D to 3D and Beyond}
\label{sec:soa}

The high dimensionality and high sensitivity of new detectors and instruments 
arise new problems for storage, processing and transferring astronomical data. 
The \emph{Big Data} problem in astronomy is not only about managing a large number of 
observations, but also to deal with files of large size.

The research on 2D image analysis techniques, such as
segmentation~\citep{russ06} and denoising~\citep{motwani04}, has been largely addressed using the standard
pixel-based representation. The use of these techniques is widely
spread in astronomy \citep{starck02}, including specialized packages that
directly produce catalogs from observations \citep{Bertin96}.

However, extending these algorithms to more
dimensions is a non-trivial task. The main problem is the curse
of dimensionality \citep{donoho00}: not only the amount of data growths
exponentially in terms of bytes with more dimensions, but the algorithms used for 2D images become
intrinsically more complex. 
%Moreover, the resulting data structures are usually
%not suitable for analysis, because they strongly depend on the
%original data dimensions and size. 
Between the difficulties for extending 2D algorithms to more dimensions we want to highlight:

\begin{itemize}
\item \emph{Resulting Images}: pixel-based representations can give us high
level information in 2D images, such as colored segments, thresholded views
and contour images. However they become hard to visualize, annotate,  store
and transfer in more dimensions \citep{kozak15}. A clear example is pixel-based
segmentation \citep{russ06}, which assigns a
class to each (significant) pixel. Unfortunately, in spectroscopic cubes a pixel-based
segmentation\footnote{We use pixels
as the generic name for elements of a n-dimensional array, rather than voxels
or other dimensionality-dependent names.}
result is almost as hard to store, transport and analyze than the original
cube, because users can only work with 2D projections of the result.
Additionally, merging different cubes requires extra steps like rotating,
down-sampling and interpolating data, which could affect the content of the
cube.
\item \emph{Pixel Vicinity}: the complexity of the vicinity of pixels
strongly depends on the dimensionality. For
example, a connectivity by contact in a 2D image consist only of 8 pixels (4
edges and 4 vertices), but
this becomes 26 pixels for 3D cubes (6 squares, 12 edges and 8 vertices) and
80 pixels for 4D cubes (8 cubes, 24 squares, 32 edges and 16 vertices).
Complex statistical models, such as Markov random fields \citep{kato12}, will not only
increase its computational time due to augmented number of pixels, but also by the
non-linear increase of pixel connectivity. Moreover, pixel vicinity
becomes a hard problem when two cubes with different resolutions need to
be merged.
\item \emph{Intensity Dilution}: the intensity of a phenomenon observed in more
dimensions dilutes, so some structures can be detected only statistically due
to the low signal-to-noise ratio. A clear example of this problem arises in
thresholding techniques, where large structures that can be easily detected in
low dimensional projections of the cube might be neglected because
most of their components fall below the applied threshold. 
%Also, the scale
%of each pixel intensity of images of a similar phenomena do not directly compare
%due to the intensity dilution.
\end{itemize}

To tackle these problems, astronomers perform data preprocessing operations such as
binning, resampling and integrating data, which allow them to analyze lower-dimensional
projections of data. Also, astronomers use content-based operations to such 
as background filtering, pixel masking, or automatic region-of-interest 
detection, in order to analyze more compact representations of data. 
For example, a very simple detection method is thresholding, which allow us
to select those pixels that are over some flux value (e.g., the RMS) and discard
the rest in the same way that is done for 2D images. This can be combined with 
a low-pass filter to smooth the signal and select regions of interest rather than isolated pixels. 
More advanced methods use morphological transformations (structured elements or kernel-density functions) and edge detection techniques such as in \citet{mendoza16}. 

An interesting family of detection methods is pixel-based clumping algorithms
for spectroscopic data cubes, which separate the signal not only from background, but 
\emph{clusterize} pixels in different emission sources. The clumpy structure of molecular clouds and
other extended objects in astronomy, allows separating sources to analyze them
independently or to find the astrophysical relationships between them. 
The most known method in this category is ClumpFind \citep{williams94}, which uses 
contours at different RMS levels to define clumps.
A relatively newer method is FellWalker \citep{berry15}, that uses hill-climbing and 
cellular automata techniques to offer a more intelligent separation.
%Unfortunately, in both methods we can only assign one pixel
%to one cluster.
%This drawback is addressed by the DendroFind \cite{DendroFin} 
%algorithm, which creates a hierarchical allocation of clusters, allowing a pixel to 
%belong to several clusters at different resolutions.  
These methods select those pixels (or regions) that are unlikely explained
by the background. However, it is difficult to determine how much flux of each pixel can be
explained by the signal or by the background. 

Gaussclumps \citep{stutzki90} is an almost 30 years old method that iteratively fit Gaussian
structures directly in both spatial and spectral dimensions. The method is very
well motivated in the sense that emissions and clump structures can often be
accurately represented using Gaussians. 
%Moreover, any scalar field can
%be approximated by a mixture of Gaussians with an arbitrarily low error
%depending on the number of components \cite{}.
The fitting is done using non-linear optimization under several soft constraints 
that emerge from astrophysical restrictions, based on a simple radiative emission 
model \citep{stahler2008}.
Unfortunately, the algorithm is complex to analyze from an algorithmic
perspective and difficult to use in practice.  First, the algorithm is composed by several iterative heuristics and optimization steps, each one with multi-criteria halting conditions that depend of free parameters. This prevents a proper analysis of the convergence and computational complexity of the algorithm, because the number of variables to analyze are too many. Moreover,
most of the parameters are meaningless from the astrophysical point
of view, making them hard to tune. For example, the number of 
consecutive iterations where the non-linear optimizer failed to converge, is a
halting parameter that is unlikely to be tuned \emph{a priori}, and its
connection to astrophysical parameters is very weak or non-existent. 
In addition, there are more than thirty parameters in the
off-the-shelf implementation \citep{berry2007cupid}. Thus, 
performing a thorough sensitivity analysis becomes impractical.

However, is important here to distinguish between model and method: 
while the model is
well motivated and sound, the algorithm used for finding the Gaussian components
does not offer theoretical guarantees and is hard to tune.

\iffalse
%In this article, we use a similar model than GaussClumps, but then use simpler yet
%justified algorithms. 

%Unfortunately, even for a given number of Gaussians $K$, this is already a highly non-linear
%and non-convex optimization problem with constraints, so the available numerical solutions
%are computationally expensive and heuristic (local minima problem, numerical
%stability, non-smooth borders, etc.). Moreover, it will require repeating the process for several
%values if $K$ is unknown. This is why GaussClumps iteratively fit Gaussians
%until some of the stopping criteria hold, rather than optimizing all parameters
%at once.
%
%Through the paper we will use some results concerning Gaussian distributions, 
%so please note that a multivariate Gaussian function $G(\cdot)$ can be represented by
%a multivariate normal distribution
%\begin{equation}
%\mathcal{N}(\mathbf{x};a,\mu,\Sigma) = \frac{1}{\sqrt{8\pi^2 det(\Sigma)}}\exp\left[-0.5
%(\mu - \mathbf{x})^{\top}\Sigma^{-1}(\mu - \mathbf{x})\right]
%\end{equation}
%scaled by the integral of the exponential term:
%\begin{equation}
%G(\mathbf{x};a,\mu,P) =a\cdot \sqrt{\frac{8\pi^3}{det(P)}}   \mathcal{N}(\mathbf{x};a,\mu,P^{-1})
%\end{equation}
%
%\section{Gaussian Representation for Spectroscopic Cubes}
%\label{sec:igfd}
%
%
%TODO
%
\fi
\vspace{-2mm}

\section{Homogeneous Compact Representation}
\label{sec:hcr}

\vspace{-1mm}

Our proposal is a simple yet powerful approach to spectral cube analysis,
which consists in decomposing the data in several identical volumes that represent
the actual signal of the cube, similar to kernel density estimation. 
As target signals are relatively sparse in a 3D space, this method will produce a compact
representation with homogeneous components that are 
easier to fit in memory and to process compared to the original data. 

For the sake of generality, we define that an cube $C$ as a n-dimensional matrix of scalars that can be decomposed by linear combination of kernel instances $\hat{C}$ plus a noise (Equation~\ref{eq:model}).

\begin{equation}
C(\mathbf{x}) \simeq \hat{C}(\mathbf{x};\theta) + \epsilon = \sum_i^k
\alpha_i K(\mathbf{x},\mathbf{y}_i) + \epsilon ,
\label{eq:model}
\end{equation}

\noindent where $\mathbf{x} \in \mathcal{X} $ is a coordinate of the cube, $\mathcal{X} \subset \mathbb{N}^d$ is the set of all pixels coordinates of the cube, 
$K(\cdot,\cdot) \mapsto [0,1]$ is a kernel function, $\mathbf{y}_i \in \mathbb{R}^d$ are
kernel location points, $\alpha_i$ are positive scalars and
$\epsilon$ is a random-noise variable. We denote by $\theta$ the set of model parameters, in this case $\mathbf{y}_i$ and $\alpha_i$. In general, the kernel function could
be replaced with any function with arbitrary parameters that can be added to
$\theta$, for instance by a structured Gaussian function like in \citet{stutzki90}. 
However, as our objective is to produce a compact representation, we constraint the functions to kernels with location points in the same $\mathcal{X}$ space
\footnote{Formally, both variables are not in the same space because $\mathbf{y}$ lives in a continuous space while $\mathcal{X}$ is discrete, but both represent positions.}.
To simplify even more the model we propose choosing constant intensities in
the linear combination, i.e. $\alpha_i = \sigma, \forall i$. 
Accordingly, our representation is \emph{homogeneous}, because each volume has the same structure (kernel) and 
contains the same energy (constant intensity). Also, the kernel should comply with $K(x,x) = 1$ for convenience.

We assume that $\epsilon$ is Gaussian, but we constraint the noise only to positive values because the background noise 
is additive in astronomy. This constraint can be modeled by letting $\epsilon \sim f(\sigma)$ be 
half-normally distributed --- a zero-mean normal distribution with positive values only --- where the $\sigma$ parameter
can be estimated from data (e.g., computing the RMS). Please note that this \emph{positivity constraint} is only applied to the noise model, allowing the cube to have negative values due to flux calibration or continuum subtraction. Under these assumptions, the generic emission detection problem can be casted as
the maximization of the log-likelihood function:

\begin{align}
\argmax_\theta
\left[\ln\left(\mathcal{L}(C|\theta,\sigma)\right)\right] &=
\argmax_\theta\left[  
\sum_{\mathbf{x} \in \mathcal{X}} \ln \left(f(C(\mathbf{x}) -  \hat{C}(\mathbf{x};\theta) ;
\sigma)\right)\right] \nonumber \\
&= \argmin_{\mathbf{y}_1,\ldots,\mathbf{y}_k}\left[  
\sum_{\mathbf{x} \in \mathcal{X}} \left(C(\mathbf{x}) -
\sigma \sum_i^k K(\mathbf{x},\mathbf{y}_i) \right)^2 \right], 
\label{eq:likelihood}
\\
 s.t. \hspace{5mm} & 
\sigma \sum_i^k K(\mathbf{x},\mathbf{y}_i) \leq C(\mathbf{x}) , \forall\mathbf{x} \in \mathcal{X} ,
\label{eq:constraint}
\end{align}

In summary, the problem consist in 
solving a least-squared problem (Equation~\ref{eq:likelihood}) with a positivity constraint (Equation~\ref{eq:constraint}).

If the cube is sparse with respect to the noise level, we expect to produce 
significantly less parameters in $\theta$ than the number of original pixels in $\mathcal{X}$, 
keeping in the representation enough information to perform data analysis. 

For example, consider that we use only the pixels above a flux level $\tau$, 
and choose $K(\mathbf{x},\mathbf{y}) = 1$ if $x = y$ and $0$ elsewhere. 
Then, Equation~\ref{eq:likelihood} can be solved by representing the cube as a
set of homomorphic box volumes, in which each pixel $i$ above $\tau$ is represented by $n_i$ boxes.
Formally, Equation~\ref{eq:likelihood} reduces to

\begin{align*}
\argmax_\theta
\left[\ln\left(\mathcal{L}(C|\theta,\sigma)\right)\right] 
&=\argmin_{\mathbf{y}_1,\ldots,\mathbf{y}_k}\left[  
\sum_{\mathbf{x} \in \mathcal{X}} \left(C(\mathbf{x}) -
\sigma \sum_i^k I(\mathbf{x},\mathbf{y}_i) \right)^2 \right] \\
&=\argmin_{\mathbf{n} \in \mathbb{N}^k}\left[  
\sum_{\mathbf{x} \in \mathcal{X}} \left(C(\mathbf{x}) -
\sigma n_\mathbf{x} \right)^2 \right], \\
s.t. \hspace{5mm} & \sigma n_\mathbf{x} \leq C(\mathbf{x}), \forall \sigma n_\mathbf{x} > \tau,
\end{align*}

\noindent and its optimal solution is 

\begin{equation*}
\theta = \{n_\mathbf{x} = \left\lfloor C(\mathbf{x})/\sigma \right\rfloor \textrm{ } |  \textrm{ } C(\mathbf{x}) >
\tau , \mathbf{x} \in \mathcal{X}\},
\end{equation*}
where $\floor{\cdot}$ indicates the floor function (i.e. integer part of the real expression).

This pixel-based homogeneous representation already reduces the size of the representation with almost no computational effort (refer to Section \ref{sec:phr} for more details). 
However, smaller and better representations can be found with the following algorithm.

\subsection{Iterative Bubble Subtraction Algorithm}

For spread kernels, the squared difference of Equation~\ref{eq:likelihood} could be 
solved using non-linear constrained optimization. However, as $k$ depends on the emission structure, signal-to-noise ratio (SNR), and intensity,
among other factors, its value is unknown. Moreover, as the problem is non-convex, the use of numerical solvers
is computationally expensive. Even by constraining the collection of location points to $\mathbf{y}_i \in \mathcal{X}$ we end up 
with a hard combinatorial optimization problem. 

We propose using a simple iterative algorithm that subtract at each step 
the volume formed by the kernel within a forced compact support. The ``bubble'' is subtracted from 
the position that holds the \emph{maximum safe energy} with respect to the kernel, constraining the substraction to a fixed intensity $\sigma$. This leads to a fast algorithm that produce an homogeneous representation of the signal. The main feature of the representation is that each component is compactly represented only by its location
$\mathbf{y}_i \in \mathcal{X}$. 

The \emph{safe energy} at each point is the maximum intensity of the bubble that can be subtracted from the cube without 
surpassing the actual flux of the pixels. 
Let $R$ be the residual cube (i.e., a working copy of the original $C$ cube),
$K$ the kernel and $\mathbb{K}_\mathbf{x}$ the set of pixels of the compact support of the kernel
around $\mathbf{x}$. Then, the safe energy cube $E$ is computed by dividing the residual by each element of the bubble (Equation~\ref{eq:residual}).

\begin{equation}
E(\mathbf{x}) = \min_{\mathbf{p} \in \mathbb{K}_{\mathbf{x}}}
\left[\frac{R(\mathbf{p})}{K(\mathbf{p},\mathbf{x})} \right], \forall
\mathbf{x} \in \mathcal{X}.
\label{eq:residual}
\end{equation}

\begin{algorithm}
\small
\caption{Update Energy}
\label{alg:update_energy}
\begin{algorithmic}[1]
   \Function{Update-Energy}{$R,E,\mathbf{y}$}
   \If {$\mathbf{y} == \textrm{None}$} \hfill \# if no center is given,
      \State $\mathcal{S} \gets \mathcal{X}$ \hfill \# compute for the whole
cube
   \Else
      \State $\mathcal{S} \gets \mathbb{K}_\mathbf{y}$ \hfill  \# update only the
modified section
   \EndIf
   \ForEach {$\mathbf{x} \in \mathcal{S}$}
       \State $E(\mathbf{x})=\min_{\mathbf{p} \in \mathbb{K}_\mathbf{x}} 
\frac{R(\mathbf{p})}{K(\mathbf{p},\mathbf{x})}$ \hfill \# safe energy 
   \EndFor
   \EndFunction
\end{algorithmic}
\end{algorithm}
We present here a high-level version of the \textsc{Update-Energy} function
(see Algorithm~\ref{alg:update_energy}) that is simple to grasp yet inefficient in
practice. To speed-up calculations, the bubble can be pre-computed, 
properties of the kernel can be exploited, and fast matrix operations can be
used instead of iterations. In fact, to compute the initial energy of a 3D cube we actually 
divide the elements of $R$ by each element of an eighth of the pre-computed bubble (symmetrical kernel), 
mapping the correct indices to find the energy at each coordinate. In addition, please note that updating $E$ after a
subtraction with forced compact support is very fast, because we need to compute only the neighbor elements in the $\mathbb{K}_\mathbf{x}$ set. 

\begin{algorithm}
\small
\caption{Bubble-Detect}
\label{alg:bubble-detect}
\begin{algorithmic}[1]
   \Function{Bubble-Detect}{$C,K,\sigma,\tau$}
   \State $Y \gets \{\}$
   \State $R \gets C$ \hfill \# working copy 
%   \State Setup kernel $K$ using $\gamma$ and $\Delta$ \hfill \# Equation \ref{}
   %\State \textsc{Update-Energy}($R,E,\textrm{None}$) \hfill \# initialize $E$
   \State $\mathbf{y} \gets \textrm{None}$
   \Repeat
      \State \textsc{Update-Energy}($R,E,\mathbf{y}$)
      \State $\mathbf{y} \gets \argmax_\mathbf{x \in \mathcal{X}}
\left[E(\mathbf{x})\right]$ \hfill \# maximum safe energy 
      \If {$E(\mathbf{y}) \leq \sigma$} \hfill \# exit if the maximum energy is lower
than the noise level
         \State \textbf{break}
      \EndIf
      \ForEach {$\mathbf{x} \in \mathbb{K}_\mathbf{y}$} \hfill  \# subtract a bubble from the residual
         \State $R(\mathbf{x}) \gets R(\mathbf{x}) - a\cdot
K(\mathbf{x},\mathbf{y})$ 
      \EndFor
      \State $Y \gets Y \cup \{\mathbf{y}\}$ \hfill \# save the bubble position
   \Until{$E(\mathbf{y}) \leq \tau$} \hfill \# stop when the energy is less
than a threshold
   \State \Return $Y$ 
   \EndFunction
\end{algorithmic}
\end{algorithm}

The \textsc{Bubble-Detect} function (see Algorithm~\ref{alg:bubble-detect}) 
subtracts bubbles until the maximum safe energy falls below a $\tau$ threshold. 
It returns a set of pixel locations $Y$, each one represented by a single integer value 
if the correct encoding is chosen. 

The parameters of the algorithm are designed to be meaningful for astronomers. 
%his section explores the possible values of the two scalar parameters of
%Algorithm~\ref{alg:bubble-detect} ($\tau$ and $\sigma$) and presents the 
%parametric Gaussian Kernel that will be used for the rest of the paper.
%
The $\sigma$ parameter corresponds to the noise level of the observation. In
our implementation this is equal to the empirical RMS, but the parameter could be
overwritten by the astronomer with a better estimation, for example when the source 
is as extended as to dominate the signal of the whole cube.
The $\tau$ parameter is the target minimum signal that the algorithm
should consider as a detection. This means that $\frac{\tau}{\sigma}$ is the
signal-to-noise ratio frontier where the signal is indistinguishable from noise. 
This also strongly depends on the astronomer knowledge about the nature
of the signal and the noise. A $\tau=\sigma$ means that we stop subtracting
bubbles when a bubble intensity reaches the noise limit. In our experiments
we introduce an heuristic to estimate $\tau$.

Our algorithm verify some basic properties regardless the kernel used.

\begin{theorem}[Upper bounded iterations]
\label{th:1}
Let $F_\tau$ and $n_\tau$ be both the integrated flux and the number of pixels with intensity greater than 
$\tau$ respectively. The number of steps (and solution size) $n_\textsc{BD}$ of the \textsc{Bubble-Detect} algorithm 
is finite, deterministic and upper bounded by $\frac{F_\tau - n_\tau(\tau - \sigma)}{\sigma}$.
\end{theorem}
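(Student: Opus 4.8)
The plan is to treat determinism as bookkeeping and to put the real work into a monotone potential argument. First I would fix a deterministic tie-breaking rule for the $\argmax$ in \textsc{Bubble-Detect} (say, the lexicographically smallest coordinate); once this is fixed, every line of \textsc{Update-Energy} and \textsc{Bubble-Detect} is a deterministic function of the inputs $C,K,\sigma,\tau$, so the returned set $Y$ and hence $n_\textsc{BD}=|Y|$ are determined. Finiteness then follows for free from the quantitative bound, since each pass through the loop either executes \textbf{break} or performs exactly one subtraction.

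For the bound, let $\mathcal{X}_\tau=\{\mathbf{x}\in\mathcal{X}:C(\mathbf{x})>\tau\}$ and define the potential
\[
\Phi(R)=\sum_{\mathbf{x}\in\mathcal{X}_\tau}\max\!\big(R(\mathbf{x})-(\tau-\sigma),\,0\big).
\]
The index set $\mathcal{X}_\tau$ is frozen at its initial value, so $\Phi(C)=\sum_{\mathbf{x}\in\mathcal{X}_\tau}\big(C(\mathbf{x})-(\tau-\sigma)\big)=F_\tau-n_\tau(\tau-\sigma)$, while $\Phi\ge 0$ throughout. I would show that each subtracted bubble lowers $\Phi$ by at least $\sigma$; together with $\Phi\ge 0$ this gives $n_\textsc{BD}\,\sigma\le\Phi(C)$, which is exactly the claimed bound.

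Two invariants drive the per-step decrease. \emph{(i) The residual stays non-negative.} A bubble (of intensity $a=\sigma$) is subtracted only after the test $E(\mathbf{y})>\sigma$, and by the safe-energy identity $E(\mathbf{y})=\min_{\mathbf{p}\in\mathbb{K}_\mathbf{y}}R(\mathbf{p})/K(\mathbf{p},\mathbf{y})$ we have $\sigma K(\mathbf{p},\mathbf{y})\le E(\mathbf{y})\,K(\mathbf{p},\mathbf{y})\le R(\mathbf{p})$, so subtracting $\sigma K(\mathbf{p},\mathbf{y})$ from each $R(\mathbf{p})$ keeps it non-negative. \emph{(ii) Each subtraction is centred high.} Every subtracting iteration except possibly the terminal one has $E(\mathbf{y})>\tau$, because the loop only continues while $E(\mathbf{y})>\tau$; since $K(\mathbf{y},\mathbf{y})=1$ gives $E(\mathbf{y})\le R(\mathbf{y})$, we obtain $R(\mathbf{y})\ge\tau$ and $\mathbf{y}\in\mathcal{X}_\tau$. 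At such a centre the subtraction lowers $R(\mathbf{y})$ by exactly $\sigma$ (the factor $K(\mathbf{y},\mathbf{y})=1$) while keeping it $\ge\tau-\sigma$, so the $\mathbf{y}$-term of $\Phi$ loses exactly $\sigma$; every other term is non-increasing because $R$ only decreases. Hence $\Phi$ drops by at least $\sigma$ at each subtraction.

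The main obstacle is the terminal subtraction, where the exit test $E(\mathbf{y})\le\tau$ may fire with $E(\mathbf{y})\in(\sigma,\tau]$: there $R(\mathbf{y})$ need not exceed $\tau$ and the centre could even lie outside $\mathcal{X}_\tau$, so one must check that this last step still respects the accounting. The reason it does is the choice of floor $\tau-\sigma$ in $\Phi$: whenever $R(\mathbf{y})\ge\tau$ the $\mathbf{y}$-term still loses a full $\sigma$ (moving from $R(\mathbf{y})-(\tau-\sigma)$ down to $R(\mathbf{y})-\tau\ge 0$), and the only genuinely delicate case is a final peak sitting strictly below $\tau$. Making this last-step bookkeeping airtight---equivalently, verifying that no subtraction is ``wasted'' against the potential---is where I would spend most of the effort; by contrast the non-negativity and high-centre invariants are routine once the safe-energy identity and $K(\mathbf{y},\mathbf{y})=1$ are in hand.
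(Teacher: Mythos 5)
Your amortized potential argument is a genuinely different route from the paper's proof, which works per pixel (claiming $n_\mathbf{x}=0$ whenever $C(\mathbf{x})<\tau$ and $n_\mathbf{x}\le\lfloor(C(\mathbf{x})-\tau)/\sigma\rfloor$ otherwise, then summing), and your treatment of determinism, finiteness, and every \emph{non-terminal} subtraction is correct. However, the proof is incomplete exactly where you flag it, and the gap is not closable bookkeeping: with the pseudocode as written, the claimed bound itself can fail at the terminal step. The loop subtracts whenever $E(\mathbf{y})>\sigma$ and only \emph{then} tests $E(\mathbf{y})\le\tau$, so the final subtraction can occur with $E(\mathbf{y})\in(\sigma,\tau]$; its centre may satisfy $C(\mathbf{y})\le\tau$, hence $\mathbf{y}\notin\mathcal{X}_\tau$, and $\Phi$ may not decrease at all. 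Concretely: if $C\equiv c$ with $\sigma<c<\tau$, then $F_\tau=n_\tau=0$ and the bound is $0$, yet $E\equiv c>\sigma$ and one bubble is subtracted. A non-degenerate version: $\tau=3\sigma$, a single peak at $4.01\sigma$ whose kernel neighbours sit just below $\tau$, off-centre kernel value $3/5$; the peak energies over successive iterations are $4.01\sigma$, $3.01\sigma$, $2.01\sigma$, so three bubbles are subtracted (the last terminal, since $\sigma<2.01\sigma\le\tau$) against a bound of $2.01$. Thus the strongest statement your potential can deliver --- and the strongest anyone can prove for the algorithm as written --- is $n_\textsc{BD}\le\frac{F_\tau-n_\tau(\tau-\sigma)}{\sigma}+1$.

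You should also know that the paper's own proof silently assumes this problem away: both of its per-pixel claims are false for the terminal subtraction, and only the subsequent loosening of the floor to $(C(\mathbf{x})-\tau+\sigma)/\sigma$ usually (not always, as above) absorbs the extra bubble. If you instead read the algorithm as subtracting only when $E(\mathbf{y})>\tau$ (i.e., as if the until-test preceded the subtraction), your invariants (i) and (ii) close the argument completely: every subtraction is centred in $\mathcal{X}_\tau$ and drops $\Phi$ by at least $\sigma$, so $0\le\Phi$ and $\Phi(C)=F_\tau-n_\tau(\tau-\sigma)$ give the bound, with finiteness and determinism as you argued (your tie-breaking fix is in fact more careful than the paper's appeal to deterministic $\min$/$\max$). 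So either state that reading as an explicit hypothesis, or prove the statement with the additive $+1$; as submitted, the plan has a genuine unresolved step, and the effort you propose to spend on it cannot succeed without amending the loop semantics or the bound.
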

%\begin{proof}
%See \ref{app:theorem}.
%\end{proof}

\begin{proof}
The algorithm is deterministic because the internal state (i.e. residual and energy) is only modified by constants and
pre-computed values (bubble). Please be aware that here we have made the reasonable assumption that maximum and minimum operators are deterministic.
%This means that the solution and number of 
%iterations are the same for a given data and parameters. 
The values of the residual and the energy matrix monotonically decreases at each step, so the sequence of bubble subtraction is finite. 

The solution $Y$ of the algorithm can be represented by a sparse vector of integers $\mathbf{n}$, where $n_\mathbf{x}$ corresponds to the number of times that a
bubble was subtracted at the index $\mathbf{x}$. 
Note that if $C(\mathbf{x}) < \tau$ the maximum possible energy at that point is

\begin{equation}
E(\mathbf{x}) \leq %min_{\mathbf{y} \in \mathbb{K}_\mathbf{x}}\left[
%\frac{R(\mathbf{y})}{K(\mathbf{x},\mathbf{y})}\right] \leq
\frac{R(\mathbf{x})}{K(\mathbf{x},\mathbf{x})} \leq C(\mathbf{x}) < \tau,
\end{equation}

\noindent implying that $n_\mathbf{x} = 0$.
For all the other $n_\mathbf{x}$, the algorithm can subtract at most the
number of identity kernels under the same threshold, which correspond to
$\left\lfloor \frac{C(\mathbf{x}) - \tau}{\sigma} \right\rfloor$ bubbles from $\mathbf{x}$, because
nearby kernels can only reduce this number, $k(\mathbf{x},\mathbf{x})=1$, and the 
$E(\mathbf{x}) \leq C(\mathbf{x}) - \hat{C}(\mathbf{x})$.

Then, the number of elements
of the solution is bounded by
\begin{align}
n_{\textsc{BD}} & = \sum_{\mathbf{x} \in \mathcal{X}} n_{\mathbf{x}}
%& = \sum_{\mathbf{x} \in \{\mathbf{x} \in \mathcal{X} | C(x) \geq \tau\} }
%n_\mathbf{x} & \\
 \leq \sum_{\mathbf{x} \in \{\mathbf{x} \in \mathcal{X} | C(x) \geq \tau\} }
\left\lfloor \frac{C(\mathbf{x}) - \tau}{\sigma}\right\rfloor  \nonumber \\
& \leq  \sum_{\mathbf{x} \in \{\mathbf{x} \in \mathcal{X} | C(x) \geq \tau\} }
\frac{C(\mathbf{x}) - \tau + \sigma}{\sigma}  = \frac{F_\tau - n_\tau(\tau - \sigma)}{\sigma} 
\end{align}

\end{proof}

\begin{table*}[tbp]
\normalsize
\centering
\begin{tabular}{l|cccccccc}
 Name & RA & DEC & FREQ & Total & Valid & ARes & BSize & SRes \\
 & $\mathrm{pix}$ & $\mathrm{pix}$ & $\mathrm{pix}$ & $\mathrm{pix}$  &  &
$\mathrm{{}^{\prime\prime}}$ & $\mathrm{{}^{\prime\prime}}$ & $\mathrm{MHz}$ \\
\hline
Orion-KL-CH3OH & 100 & 100 & 41 & 410000 & 100\% & 0.40 & 1.38 &
0.49 \\
TW-Hya-CO(3-2) & 100 & 100 & 118 & 1180000 & 100\% & 0.30 & 1.53 &
0.14 \\
M100-CO(1-0) & 600 & 600 & 40 & 14400000 & 58\% & 0.50 &
2.48 & 3.90 \\
%AntennaeN-CO3\_2 & 500 & 500 & 70 & 17500000 & 50\% & 0.13 &
%0.66 & 11.53 \\
IRAS16293-220GHz & 220 & 220 & 480 & 23232000 & 54\% & 0.20 &
0.99 & 0.49 \\
\end{tabular}
\caption{Summary of the data used in our experiments. The first column describes
the name of the source and the observed spectral line. The next four columns
summarize the dimensions of the cubes. The column denoted by ``valid'' shows how many pixels
have actual flux. The last three columns show angular resolution (ARes), minor beam size (BSize) and 
spectral resolution (SRes).}
\label{tab:dataset}
\end{table*}
\normalsize

\subsection{Diagonal Gaussian Kernel}

Selecting the right kernel is an important issue for this representation because accuracy, 
compactness and computational time are heavily affected by the type of kernel chosen. 
More complex kernels usually require to be smooth functions, because resolved astronomical sources,
such as clouds of dust and gas, usually decay smoothly. A natural choice is to use Gaussian functions, 
because both spatial densities and spectral lines can be efficiently represented by Gaussian mixtures \citep{stutzki90}. 
Gaussians could potentially have any shape, but we recommend
using a diagonal co-variance matrix that captures the resolution of the detector or
instrument used for acquiring the data.
In addition, Gaussians have no compact support. We propose forcing the compact support of a Gaussian approximation kernel, 
by using the resolution information from the metadata of the cube (e.g., spectral resolution, beam size, LAS, etc.). 
%For example, the beam size can be used to define the spatial support for ALMA data. 

Formally, let $\Delta$ be a resolution vector containing the number of pixels
from the center that defines the compact support for each dimension. Then,
a Gaussian kernel co-variance matrix
%\footnote{For convenience we use the concept of precision matrix $P$ rather than the covariance $\Sigma=P^{-1}$ in Gaussian
%functions.} 
$\Sigma$ that has a contour of $\gamma$ at the resolution
boundary can be computed as:

\begin{align}
\Sigma &=  \frac{1}{-2 \log(\gamma)} \left[ \begin{array}{cccc}
\Delta_1^2 & 0 & \dots & 0 \\
0 & \Delta_2^2 & \dots & 0 \\
\vdots & \vdots & \ddots & \vdots \\
0 & 0 & \dots & \Delta_n^2 \end{array} \right],
\end{align}

\noindent where the Gaussian approximation kernel correspond to:

\begin{align}
K(\mathbf{x},\mathbf{y}) & = \exp\{-0.5 (\mathbf{x} - \mathbf{y})^\top
\Sigma^{-1}
(\mathbf{x} - \mathbf{y})\}, \forall \mathbf{x} \in \mathbb{K}_{\mathbf{y}},
\end{align}

\noindent and $\mathbb{K}_{\mathbf{y}} = \{ \mathbf{x} \in \mathcal{X} \mid
\mathbf{x} \in [\mathbf{y} - \Delta,\mathbf{y} + \Delta] \}$.
$\Delta$ defines the size of the bubble in pixels per dimension, 
and the minimum values of it should depend on the instrumentation or physical limitations
(e.g., beam size, point spread function or minimum expected broadening).
Larger values will produce a less precise approximation with less elements,
but not forcedly in less time due to energy computations. In other words, 
$\Delta$ controls the granularity of the approximation.

$\gamma$ controls the contour level of the Gaussian kernel,
which can be understood as the degree of ``smoothness'' that we want in the solution.
$\gamma$ should comply with $0 < \gamma < 1$, where $0$ means maximal smoothing and $1$ means maximal sharpness. 
Please note that if the kernel want to be used as continuous approximation of the data, then the values of
$\gamma$ should be small. By default our implementation uses $\gamma=0.1$.

\section{Empirical Evaluation}
\label{sec:emp}

\begin{figure}[tbp]
  \begin{center}
   \includegraphics[width=0.99\linewidth]{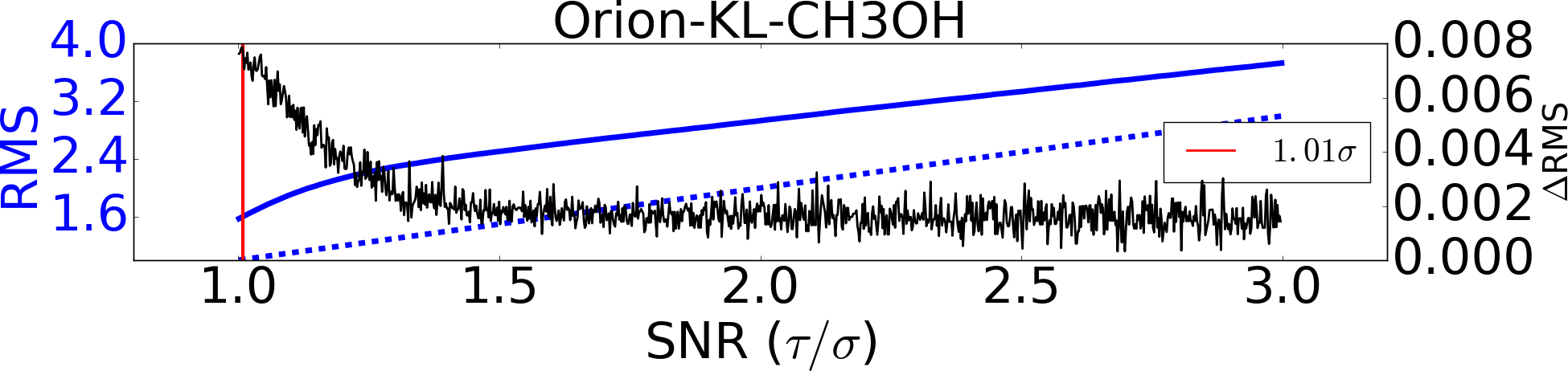}
   \vspace{2mm}
     \includegraphics[width=0.99\linewidth]{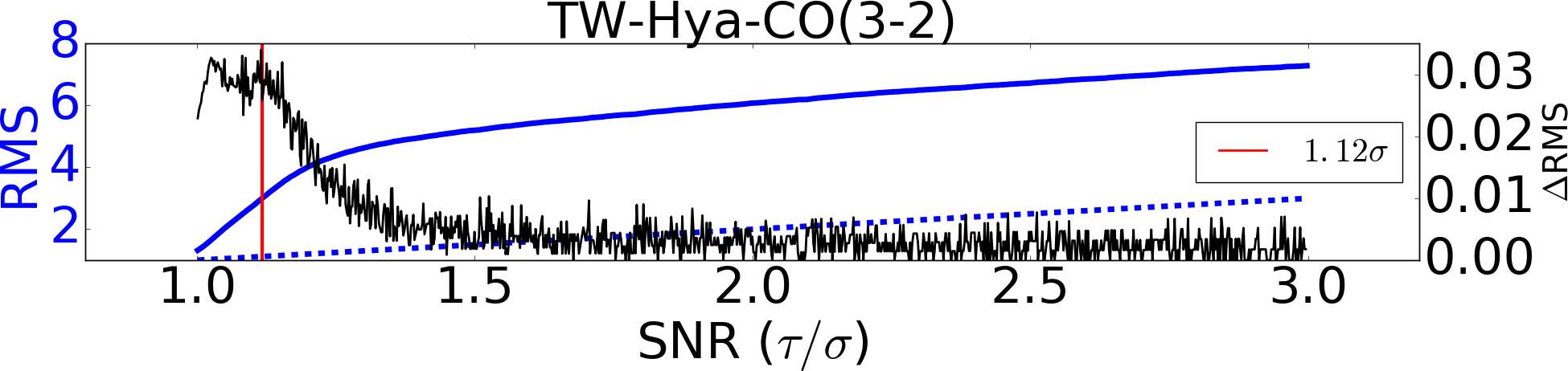}
     \vspace{2mm}
       \includegraphics[width=0.99\linewidth]{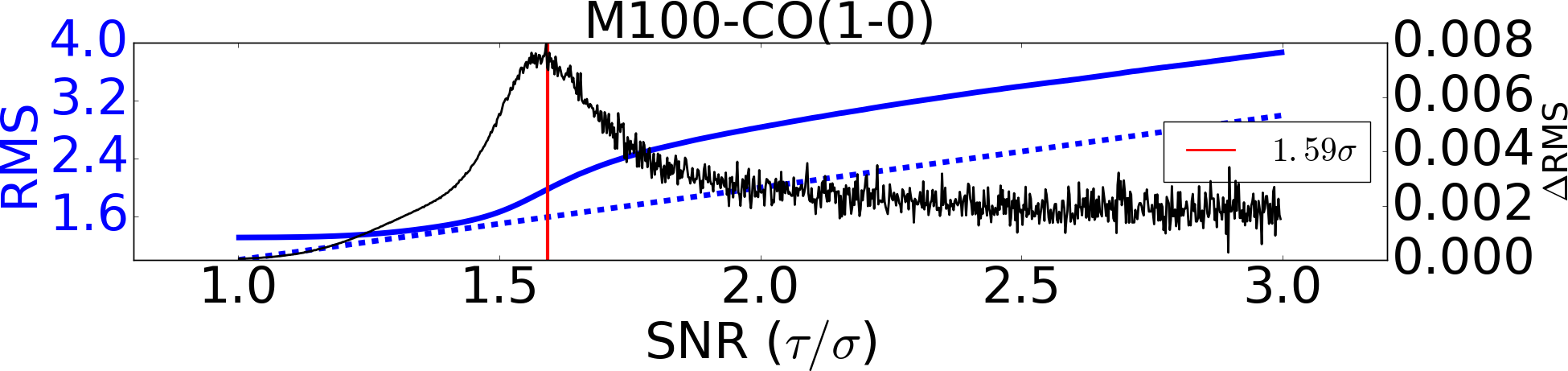}
       \vspace{2mm}
         \includegraphics[width=0.99\linewidth]{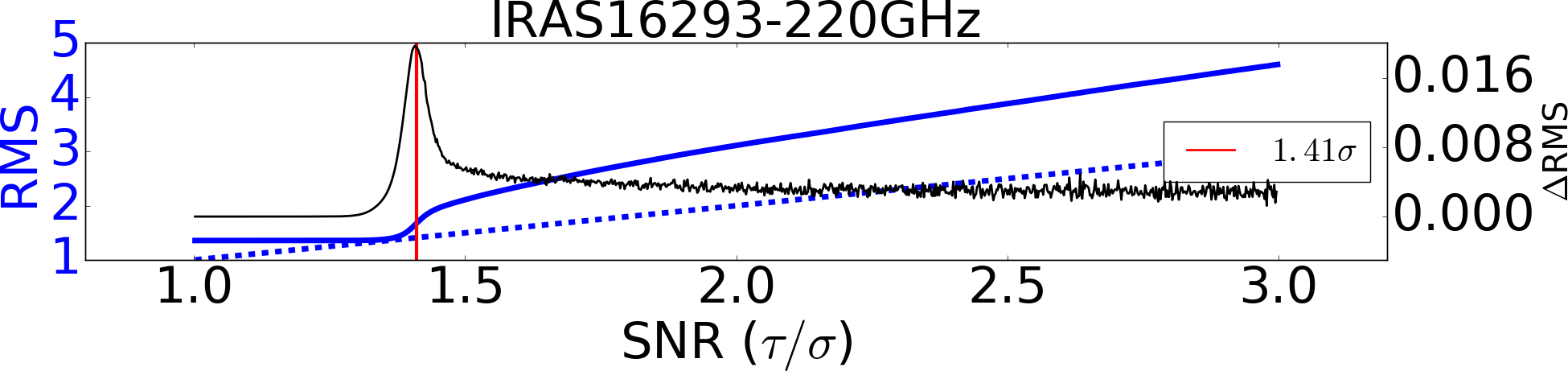}
  \end{center}
  \caption{SNR estimation. The blue lines correspond to the RMS computed after
thresholding the cube using $\tau$. The dotted line is the lowest value that the
RMS can take, the black line shows the slope of the RMS and the red line is the 
SNR selected by the heuristic.}
  \label{fig:rms-snr}
\end{figure}

\begin{figure*}[tbp!]
  \begin{center}
 \Large Orion-KL-CH30H\\
\includegraphics[width=0.90\linewidth]{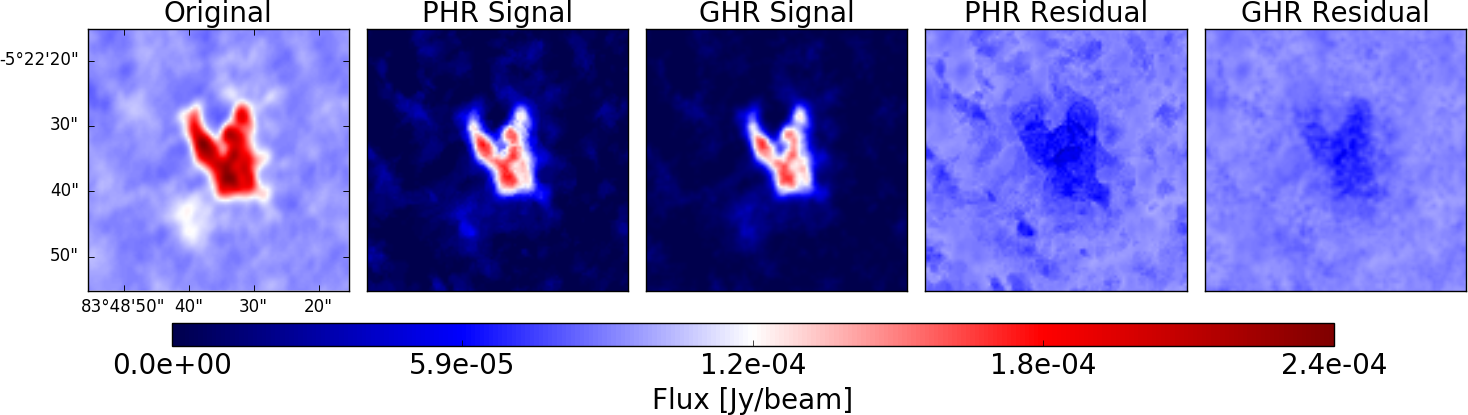}\\
 \Large TW-Hya-CO(3-2)\\
\includegraphics[width=0.90\linewidth]{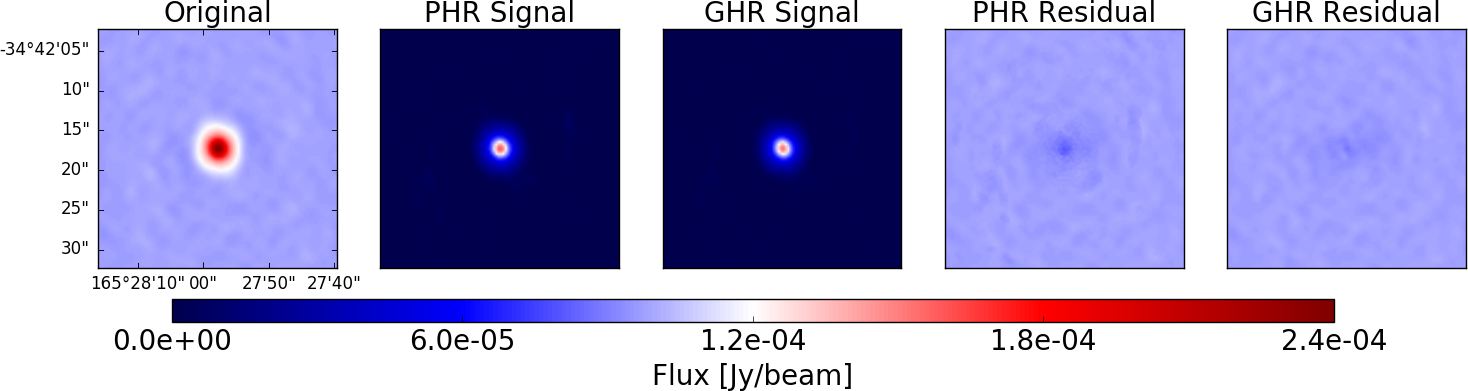}\\
 \Large M100-C0(1-0)\\
\includegraphics[width=0.90\linewidth]{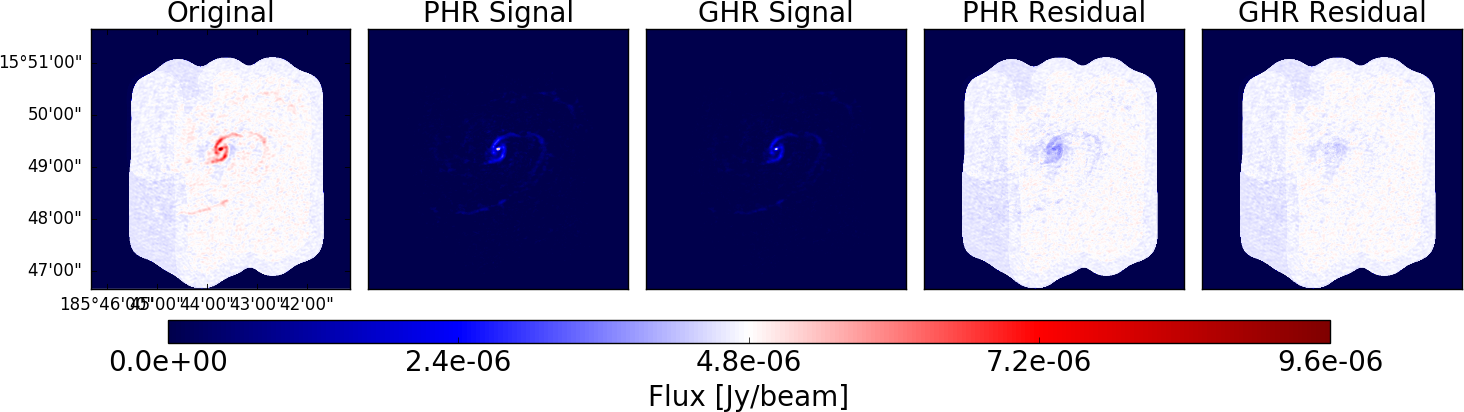}\\
 \Large IRAS16293-220GHz\\
\includegraphics[width=0.90\linewidth]{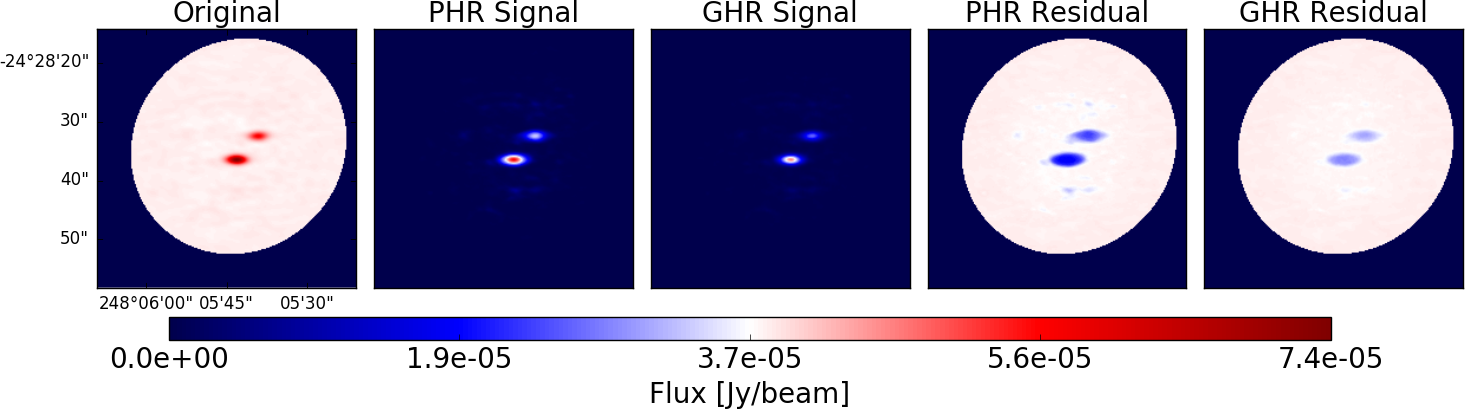}
 \end{center}
  \caption{Noise/Signal Separation using PHR and GHR. For each cube the 
zero-\textit{th} moments for the original data, the projection of the representations
over an empty cube, and the residuals are reported.}
  \label{fig:pghr}
\end{figure*}

\begin{figure}[htbp!]
  \begin{center}
 \large Orion-KL-CH30H\\
\includegraphics[width=0.95\linewidth]{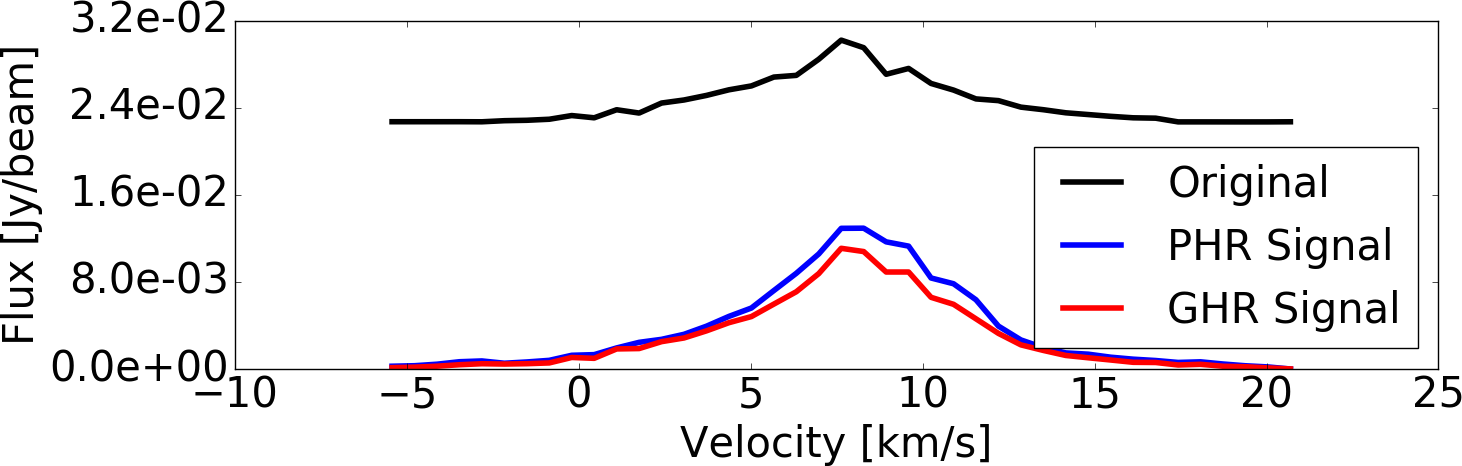}\\
 \large TW-Hya-CO(3-2)\\
\includegraphics[width=0.95\linewidth]{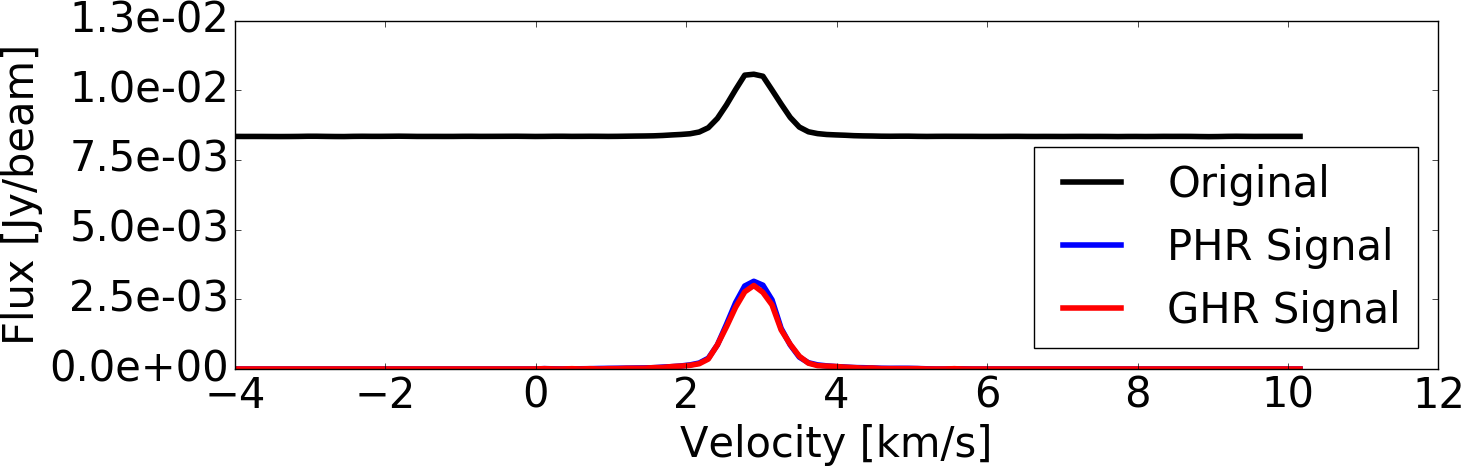}\\
 \large M100-C0(1-0)\\
\includegraphics[width=0.95\linewidth]{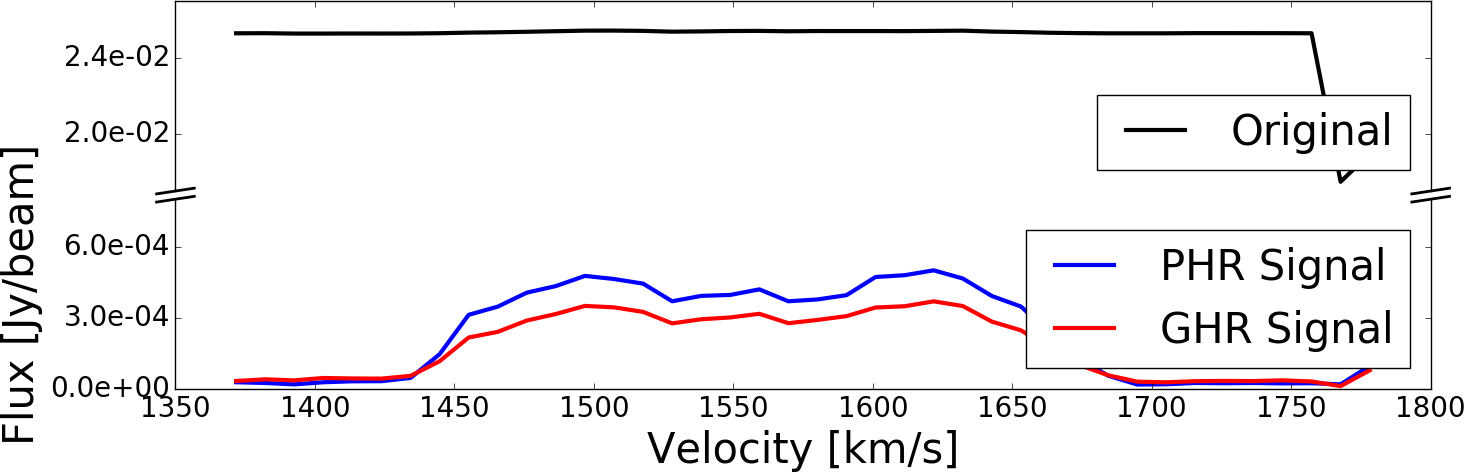}\\
 \large IRAS16293-220GHz\\
\includegraphics[width=0.95\linewidth]{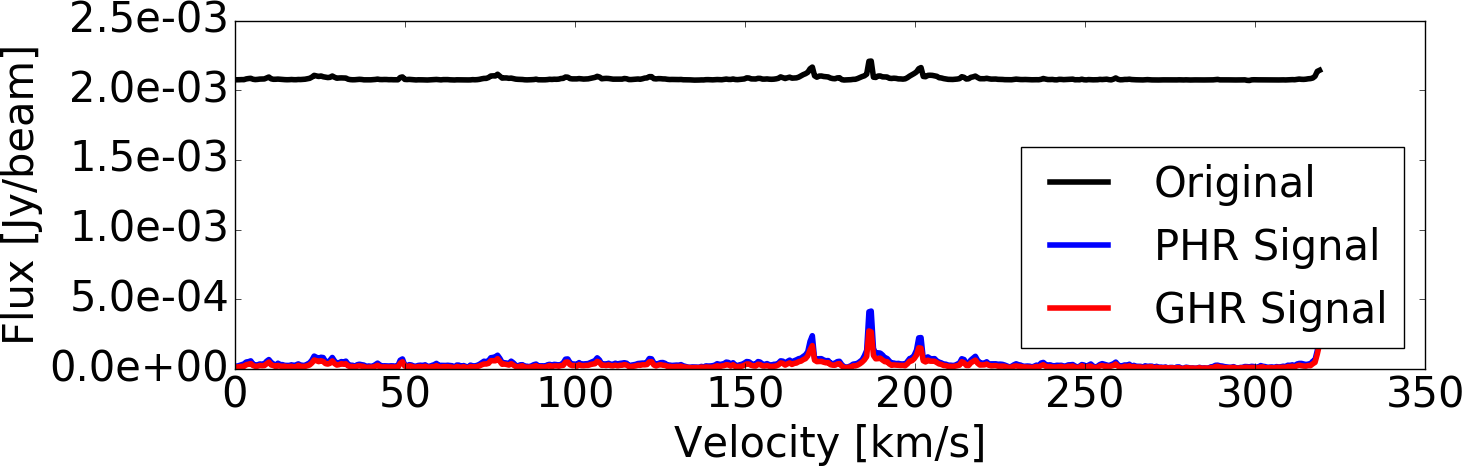}
 \end{center}
  \caption{Spectra of the Noise/Signal Separation using PHR and GHR. The integrated spectra of the same cubes of Figure \ref{fig:pghr} are presented as radial velocities with respect to the reported rest frequency
in the file header.}
  \label{fig:pghrspec}
\end{figure}

To illustrate the benefits of the proposed representation, we consider in our experiments a 
set of spectroscopic cubes from the ALMA Science Verification dataset~\citep{science_verification}. We have chosen these observations because they are thoroughly validated, and obtained from well-known sources for different type of science cases.
In Table~\ref{tab:dataset} the summary of the dimensions and resolutions of the selected cubes are presented. 
The first cube covers a region of $\sim 30 \times 30''$ towards the Kleinmann-Low Nebula (i.e., Orion KL); a massive star-forming region, in which methanol has been observed as a tracer of hot dense gas. Then we have TW Hya; a young star with a transitional disk (in evolution between the states of protoplanetary and debris disks), where a CO detection traces the outer cold disk. The M100 source is a ``grand-design'' spiral galaxy, for which a CO line was used as a tracer of the distribution of molecular hydrogen in the spiral arms. At last, IRAS16293-2422 is a protostar (dense core) that is actually a multiple system for which a cube around 220 GHz was obtained to explore the different lines and chemistry of the envelopes of the components.

For each of these cubes, an automatic transformation procedure was followed.
First, the fluxes of all the valid pixel are standardized in the $[0,1]$ range to ensure numerical stability. Then, the parameter $\sigma$ of the algorithm is estimated
as the RMS of the cube, and an estimation of the signal-to-noise ratio (SNR) is
performed to obtain the $\tau$ parameter\footnote{Even though the algorithm is resistant to aberrant pixels when using spread kernels, the RMS and SNR estimations can be largely affected by outliers. For a robust estimation we recommend a bad pixel treatment of the cube before this step.}. At last, we compute the Pixel-based Homogeneous Representation (PHR) using the identity kernel, and then the Gaussian-based Homogeneous Representation (GHR) using the Gaussian kernel and   Algorithm~\ref{alg:bubble-detect} (\textsc{Bubble-Detect}).

% * <marcelo.mendoza.rocha@gmail.com> 2016-12-14T13:24:47.541Z:
%
% Mauricio: debes modificar la figura eliminando el plot de Antennae
%
% ^.
To tune $\tau$ we propose a heuristic that explores the relation between RMS and the threshold. 
We start at $\tau=\sigma$ and stop at $\tau=3\sigma$, assuming that no flux detection can be done below
$\sigma$, and that everything over $3\sigma$ is a detection. In Figure~\ref{fig:rms-snr} we can see the SNR curve. 
We find the inflection point by looking for the maximum in $\Delta$RMS. 
The key insight of the heuristic is that the change on the RMS slope reflects the
threshold point where the remaining pixels are not longer dominated by noise.
We can notice that for Orion KL the RMS curve decreases monotonously. 
This behavior could be explained by a bad estimation of $\sigma$: the intensity of the
signal in this cube could bias the RMS estimation away from the noise level. 
Regardless, we will still use this value to show that the performance of the algorithm does not fall too much when the noise is overestimated.

\subsection{Pixel-based Homogeneous Representation (PHR)}
\label{sec:phr}

\begin{table*}[tbp]
\normalsize
\centering
\begin{tabular}{l|cccccc}
Name & RMS ($\sigma$) & SNR ($\tau/\sigma$) & PHR & PHR/Bound & PHR/Valid & Time (s) \\
\hline
Orion-KL-CH3OH & 2.60e-06 & 1.010 & 55964 & 0.83 & 0.14 & 0.36 \\
TW-Hya-CO(3-2) & 8.99e-07 & 1.118 & 23796 & 0.92 & 0.02 & 0.11 \\
M100-CO(1-0) & 9.18e-08 & 1.592 & 101890 & 0.98 & 0.01 & 1.02 \\
%AntennaeN-CO3\_2 & 8.24-08 & 1.524 & 358469 & 0.95 & 0.04 \\
IRAS16293-220GHz & 5.90e-08 & 1.410 & 328044 & 0.94 & 0.03 & 3.74 \\
\end{tabular}
\caption{PHR Solutions Summary. We report here the number of points used by the representation, including the RMS and the estimated SNR used to obtain it. 
In addition, the proportion with respect to the theoretical bound and 
with respect to the number of valid pixels are reported. 
The last column shows the time spent in computing the representation.}
\label{tab:phr-size}
\end{table*}
\normalsize

\begin{table*}[tbp]
\centering
\begin{tabular}{l|cccc}
Name & GHR & GHR/PHR & GHR/Valid & Time $[s]$ \\
\hline
Orion-KL-CH3OH & 7026 & 0.13 & 0.017 & 37 \\
TW-Hya-CO(3-2) & 1550 & 0.07 & 0.001 & 15 \\
M100-CO(1-0) & 12098 & 0.12& 0.001 & 187\\
%AntennaeN-CO3\_2 & 20771 & 0.06 & 0.002 & 393 \\
IRAS16293-220GHz & 34511 & 0.11& 0.003 & 598 \\
\end{tabular}
\caption{GHR Solutions Summary. In this table we include the
number of elements in the GHR solution, and the ratios with respect to the PHR
solution size and valid pixels of the original cube. We also report the time
spent computing this solution.}
\label{tab:ghr-size}
\end{table*}
\normalsize

Separating signal from noise is the first step in any data analysis task.
Usually, this is done by conducting data thresholding, obtaining a data product of the
same size but with several values in zero. Our proposal uses sparsification to produce
a lighter representation that can be used for data analysis.

Results of the signal-noise separation using our pixel-based
homogeneous representation are reported in Figures~\ref{fig:pghr} and \ref{fig:pghrspec}. We can observe
that the sources have different nature (i.e., star forming region,
protoplanetary disk, spiral galaxy, 
protostar binary system), and for all of them
the representation already captures the spatial structure and the line
spectra of the sources. For instance, M100 shows a clear spiral form in the
zero-\textit{th} moment and a very large FWHM and complex spectrum, while the TW Hya 
shows a very defined protoplanetary disk in the images with a narrow and 
a Gaussian shaped spectral line. Even though some residuals may show significant 
structures in their zeroth moments, please note that they are always below the noise 
level, meaning that no relevant flux is left in the residual.

Table~\ref{tab:phr-size} shows that the number of 
points are near the bound and uses one or two
orders of magnitude less points than the valid pixels of the 
original cube. Please note that this representation is not
a sparse representation of the pixels above a threshold, but a
\textbf{scatter set of independent and identically distributed (i.i.d.) samples} 
of the signal. This means that the representation is not only 
(relatively) compact, but it is also ready for all the 
statistical analysis techniques that rely on this assumption.

A PHR is very fast to compute but it has a major drawback: it does not use a priori information
about the effective resolution of the cube. For instance, in 
ALMA data, a pixel with high intensity that is surrounded only by pixels below 
the noise level it is with high probability an imaging artifact, because the beam size is usually larger than a pixel. Unfortunately, a PHR will consider the point as relevant flux for the
representation.

\subsection{Gaussian-based Homogeneous Representation (GHR)}

Consider now using the Gaussian kernel and the proposed \textsc{Bubble-Detect} method (Algorithm~\ref{alg:bubble-detect}).
In Figures~\ref{fig:pghr} and Figure~\ref{fig:pghrspec} we also report the GHR zeroth moments and spectra in order to compare them with the PHR representation. 
Even though these results seem very similar at first sight, there are several differences 
that we explore in this section.
% Smoothness
First, we can notice that GHR images and spectra are much
smoother than for PHR, because the kernel act as a smoothing
operator over the representation, leaving high-frequency noise and artifacts in residuals rather than in signals. 
Also, we can observe that GHR residuals hold a less identifiable structure and less variance than for PHR, resembling more to white noise.

As Table~\ref{tab:ghr-size} shows, GHRs are more compact than PHRs. 
Here we can observe that we reduce to 10\% of the size of PHRs and between 0.1\% and 2\% of the
original valid pixels. However, this is not cost free:
the last column reports the time used to compute the
representation~\footnote{Single-thread execution on a 2.80 GHz i7 processor.}. 
Note that computational times are meant to be spent once, as the representation is data persistent. 
%If a lower SNR is required, then the algorithm can be continued over the representation.
Compared to the theoretical bounds reported in Table~\ref{tab:phr-size},
we can see that the desired SNR is obtained much earlier. 

An important result that we want to highlight is that the size and computational speed of both 
the PHR and GHR representations depend on the complexity of the cube, and not only on the 
resolution or size of the original data. This can be observed by comparing the results for 
TW Hya and Orion KL. Even though the number of pixels in Orion KL is nearly a third of the pixels 
in TW Hya, the complex structure of the star forming region requires 4.5 times more GHR elements than the 
protoplanetary disk (3.6 times for PHR), with a similar proportion for the computational time. 

\section{Applications of GHR}
\label{sec:apps}

\begin{figure}[htbp]
  \begin{center}
\includegraphics[width=0.80\linewidth]{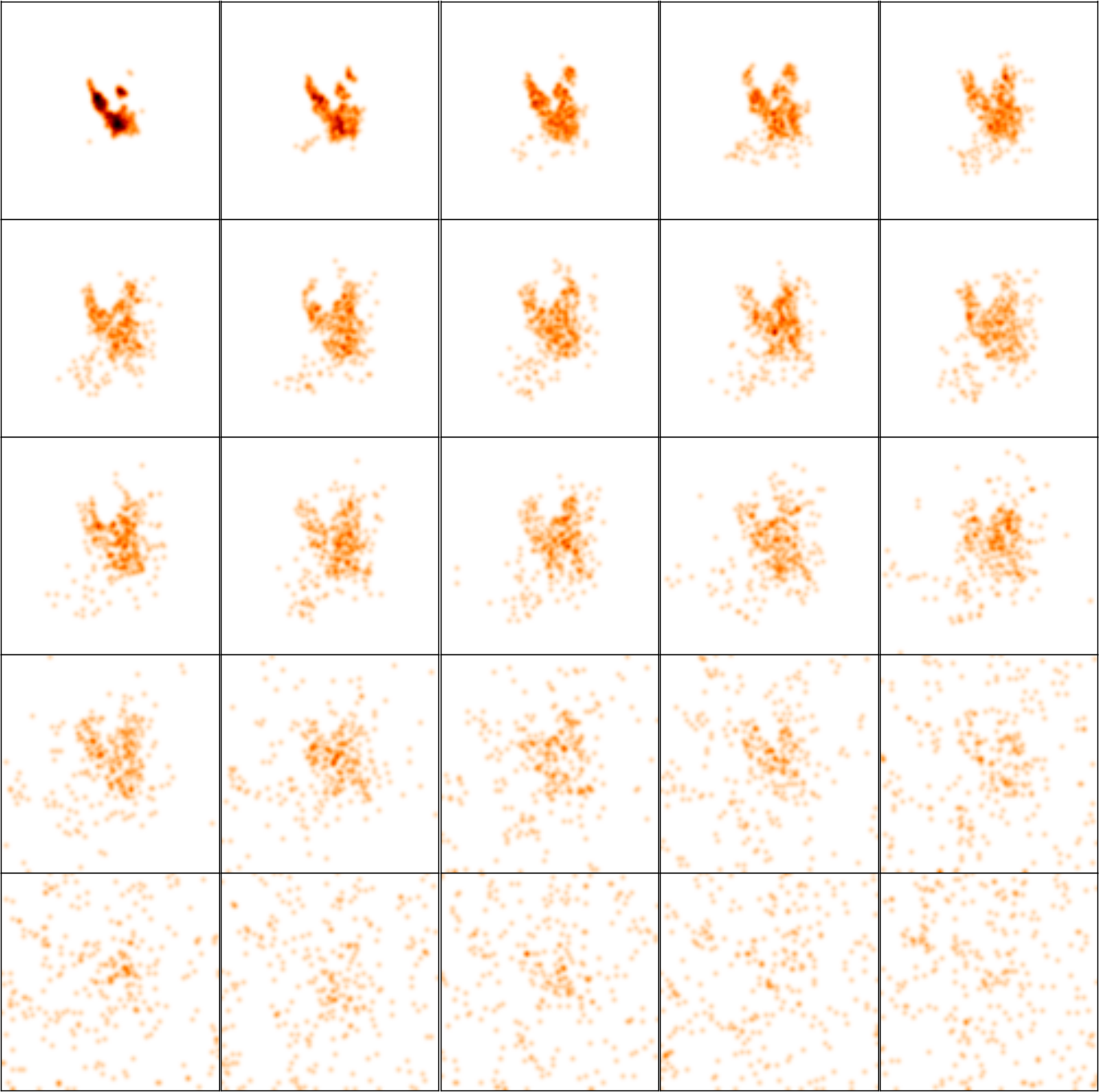}\\
 \end{center}
  \caption{Vertical decomposition of Orion-KL's GHR in 25 levels, each one with the
same total flux.}
  \label{fig:vert-thr}
\end{figure}

\begin{figure}[htbp]
%\begin{minipage}{0.49\textwidth}
\begin{center}
\includegraphics[width=0.8\linewidth,valign=t]{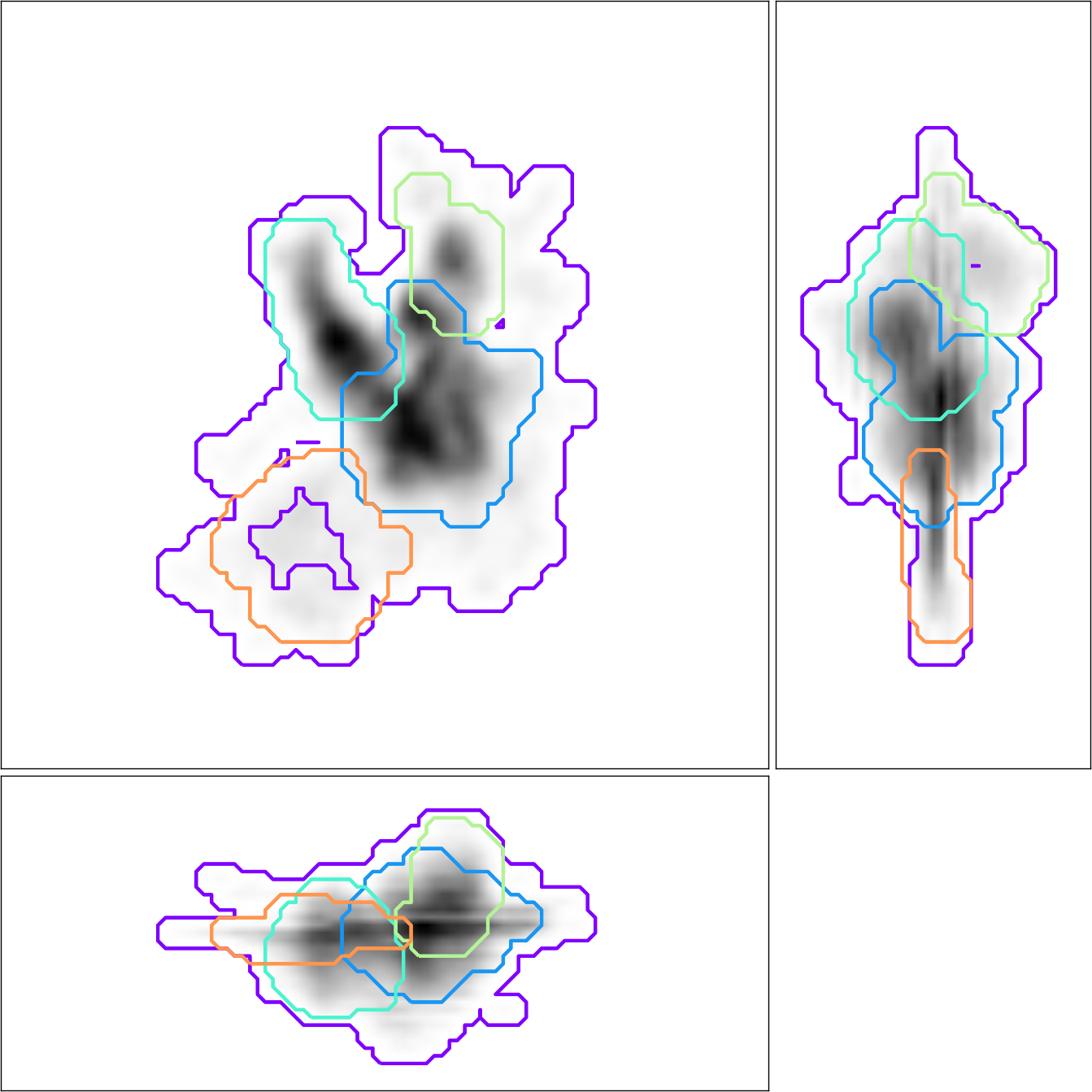}\\
\end{center}
\caption{Detected Clumps in Orion KL using Spectral Clustering. The image shows the contours of the 3D clumps projected over three views: stacked frequency, stacked right-ascension and stacked declination).}
\label{fig:orion-clump}
\end{figure}

In this section we use the GHR introduced in Section~\ref{sec:hcr}
to give a few examples of its possible data analysis usages. 
%Please note that these are not \emph{science cases}, but a showcase of
%techniques that can be used for astronomical image analysis under this representation.

To recapitulate, the main advantage of GHR with respect to PHR is that
the cube is now represented by a set of elements that
can be seen as independent and identically distributed (i.i.d) samples of the 
underlying signal. Consequently, a vast number of statistical and machine learning 
techniques found in packages like \texttt{scikit-learn} \citep{scikit-learn} can be directly
applied to the representation. %The paradigm shift here is that we do not aim to produce
%``smart'' algorithms that supersede the astronomer, but on the contrary, to enhance 
%astronomer's techniques including advanced pattern recognition algorithms.

\subsection{Vertical Thresholding and Clumping}

Let us consider the Orion KL data presented in Section~\ref{sec:hcr}.
In pixel-based representations, thresholding implies removing all
the flux of some pixels and maintain the complete flux of the rest of them.
Our representation allows performing \emph{vertical thresholding}, because
the samples are produced in flux decreasing order. Figure~\ref{fig:vert-thr} shows
a decomposition of the data in 25 levels, each one with the same total
flux. Please note that this is done over 3D representations and then projected
to 2D images. This allow us to visualize the profile of the emission,
and select data vertically. For example, we selected the first 10 levels for the 
next step, not only performing denoising as in traditional thresholding, but 
also reducing the size of the representation.

%\begin{figure}[htbp]
%\begin{center}
%\includegraphics[width=0.8\linewidth,valign=t]{}
%\caption{Spectra of Clumps in Orion KL. The colors of the clumps coincides with %Figure~\ref{fig:orion-clump}.}
%\label{fig:orion-clump-spectra}
%\end{center}
%\end{figure}

Detecting clumps is an important application that is strongly simplified by
our representation, because it reduces to a clustering problem. 
For the Orion KL data we have selected \emph{spectral clustering}~\citep{Shi00}, a well-known technique for image segmentation. 
We conducted spectral clustering using an arbitrary number of clusters equal to 5.  Figure~\ref{fig:orion-clump} show clumps projections to 2D views. 
%Spectra are reported in Figure~\ref{fig:orion-clump-spectra}. 
For example, we can observe the detection of components that differ from the core spatially and due to thinner broadening (orange), red/blue-shifted components (yellow/green), and extended ones with a few peaks that suggest more clusters are needed. This paper does not address the problem of finding the number of clusters automatically, but provides a light-wighted representation that can be used for re-clustering at will for manual or automatic parameter tunning.

\subsection{Source Selection and Manifold Representation}

\begin{figure}[tbp]
%\end{minipage}
%\begin{minipage}{0.49\textwidth}
\begin{center}
\includegraphics[width=0.8\linewidth,valign=t]{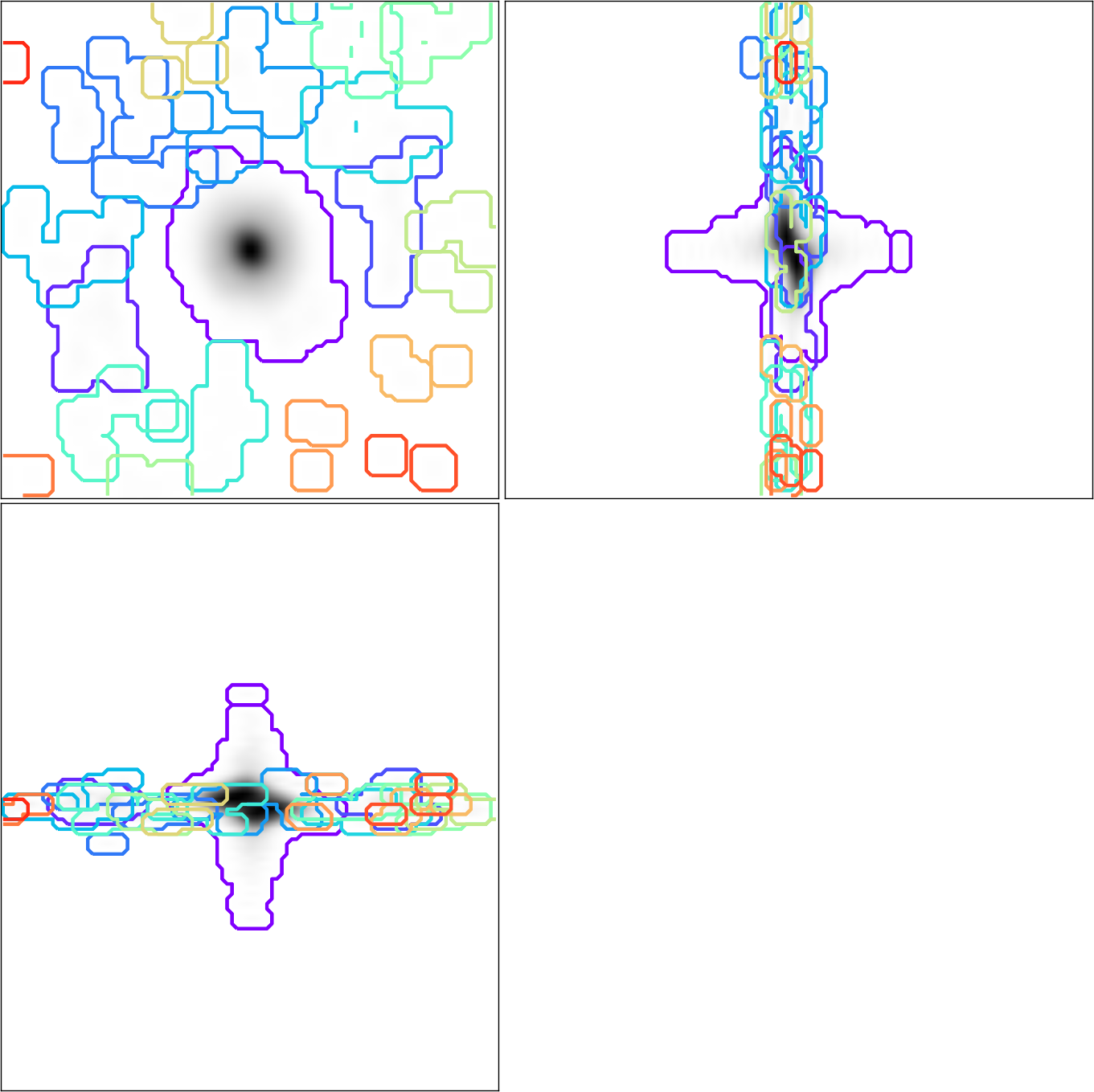}\\
\end{center}
%\end{minipage}
  \caption{Detected clumps in TWHya using Mean Shift. The image shows the contours of the 3D clumps
projected over three views (stacked frequency, right-ascension and declination).}
  \label{fig:hydra-clump}
\end{figure}

\begin{figure}[tbp]
\begin{center}
\includegraphics[width=0.8\linewidth,valign=t]{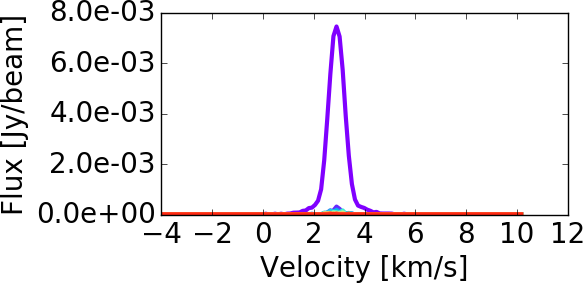}
\caption{Spectra of Clumps in TW Hya. The colors of the clumps coincides with Figure~\ref{fig:hydra-clump}.}
\label{fig:hydra-clump-spectra}
\end{center}
\end{figure}

\begin{figure}[tbp]
  \begin{center}
\includegraphics[width=0.45\linewidth,valign=t]{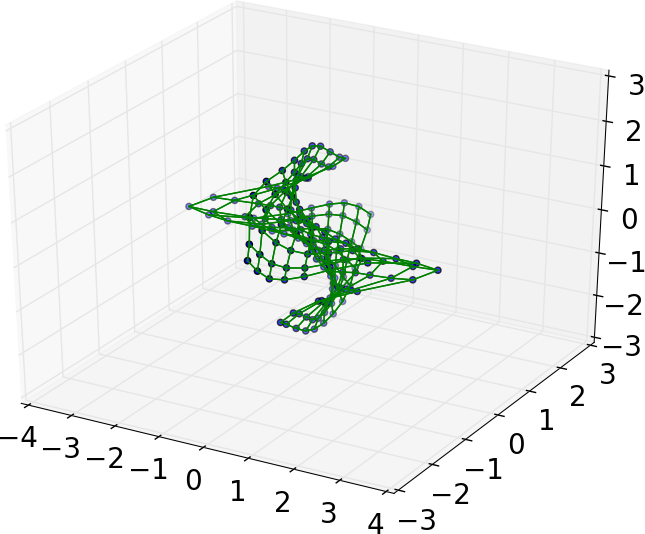}
\includegraphics[width=0.45\linewidth,valign=t]{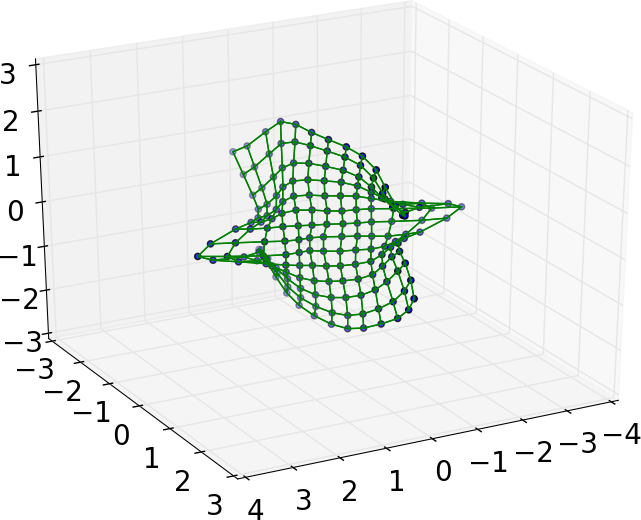}\\
\includegraphics[width=0.45\linewidth,valign=t]{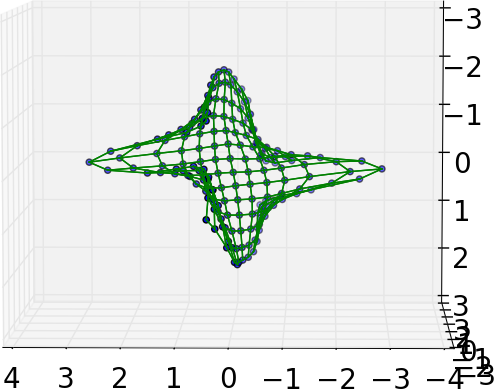}
\includegraphics[width=0.45\linewidth,valign=t]{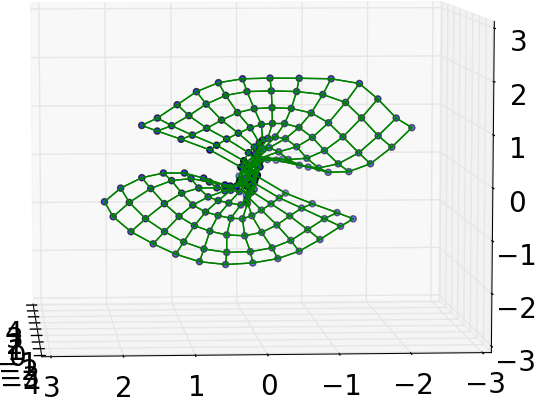}\\
 \end{center}
  \caption{Synthesized manifold of TW Hya, including isometric (top) and side
viewpoints (bottom).}
  \label{fig:twhydra-som}
\end{figure}

Let's consider now the TW Hya data. Even though this might look as a unique source,
the data contains other non-centered emissions. Therefore, we can use a non-parametric
clustering algorithm called \emph{mean shift} \citep{Cheng95} that finds local maxima and
construct clusters around them. The algorithm automatically segments the data
in 19 clusters showed in Figure \ref{fig:hydra-clump}. We select only the highest emission
cluster (magenta) for the next step, because the individual fluxes of all the other clumps are marginal compared to that one. In fact, we can see in Figure~\ref{fig:hydra-clump-spectra} that the other clusters have a very small and indistinguishable flux spectra compared to the magenta line.

The selected cluster have a clear shape both in spatial and spectral dimensions. 
Accordingly, we can synthesize a manifold to represent it. To do this we use an
artificial neural network technique called \emph{self organizing maps} (SOM), 
that represents the data by a bi-dimensional grid of connected
neurons~\footnote{We use the SOMPY package that can be found in github:
https://github.com/sevamoo/SOMPY.} \citep{kohonen82}. 
In Figure~\ref{fig:twhydra-som} the manifold representation of the source is shown, starting from a planar grid. This representation might help, for example, to better understand the topology of the source when comparing with theoretical models. %This could be improved by setting a toroidal
%starting grid, but this option was not yet implemented in the used package.

\subsection{Filtering and Reduction by Clustering}

\begin{figure*}[tbp]
  \begin{center}
\includegraphics[width=0.27\linewidth]{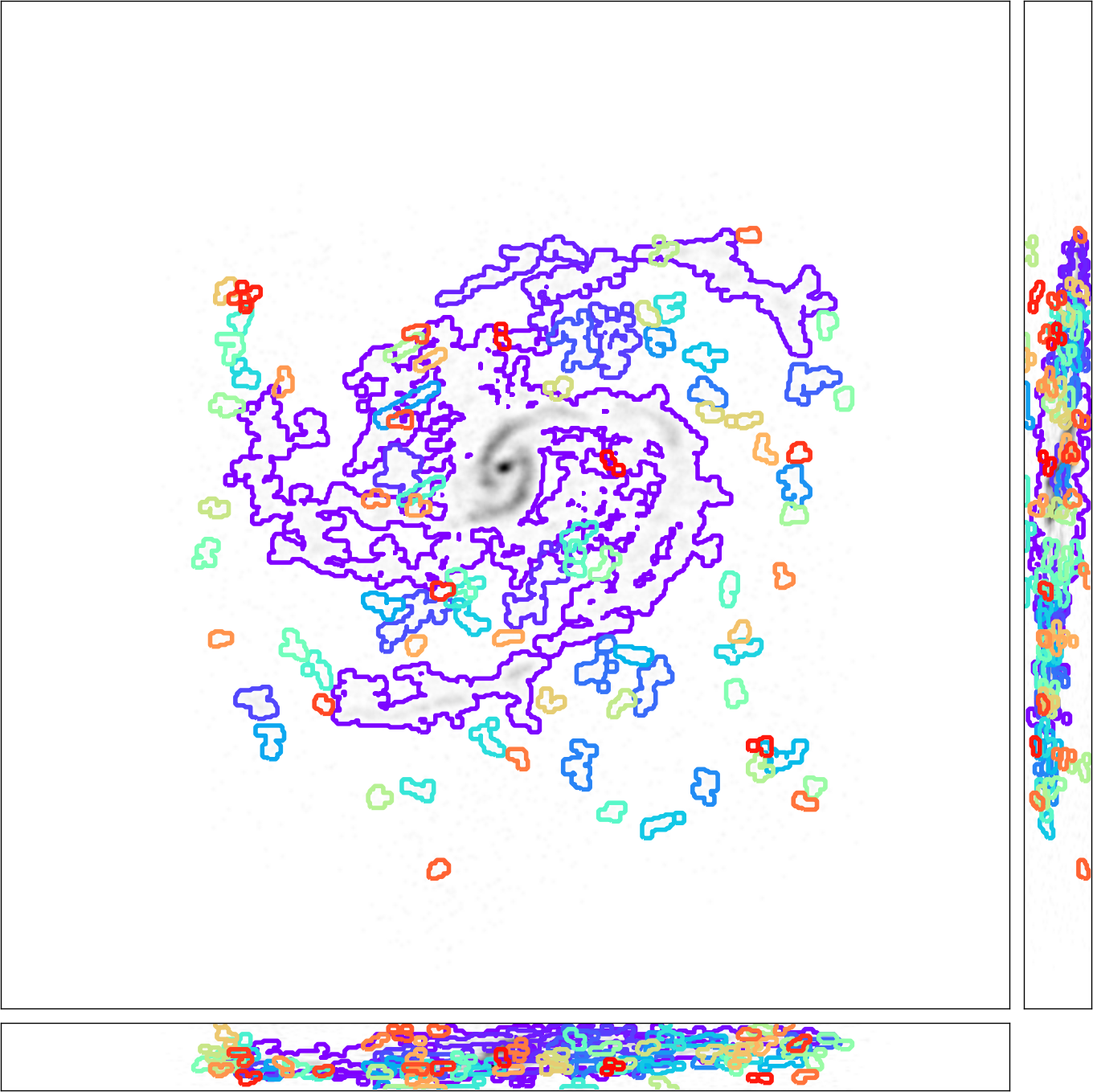}
\includegraphics[width=0.27\linewidth]{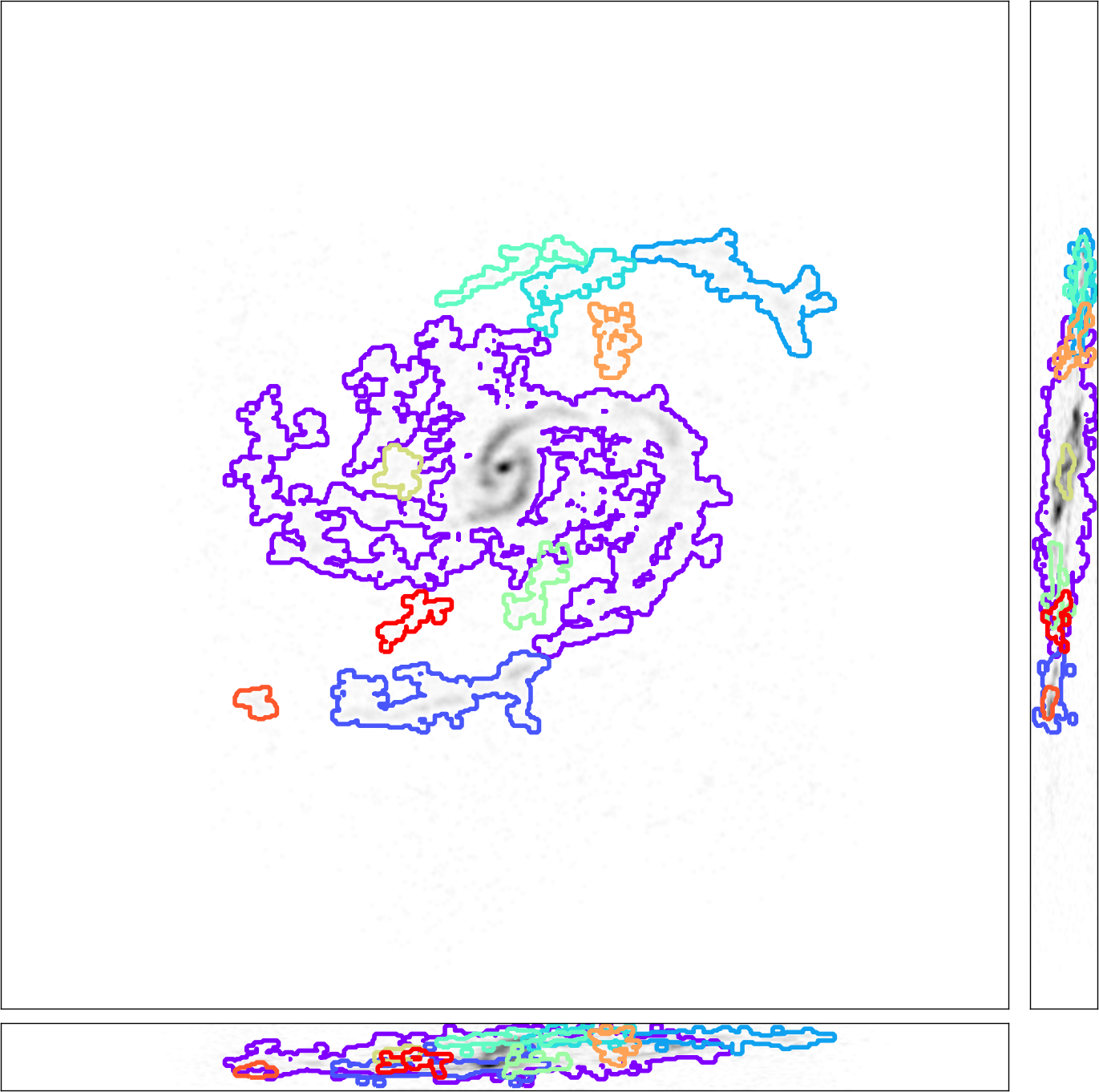}
\includegraphics[width=0.27\linewidth]{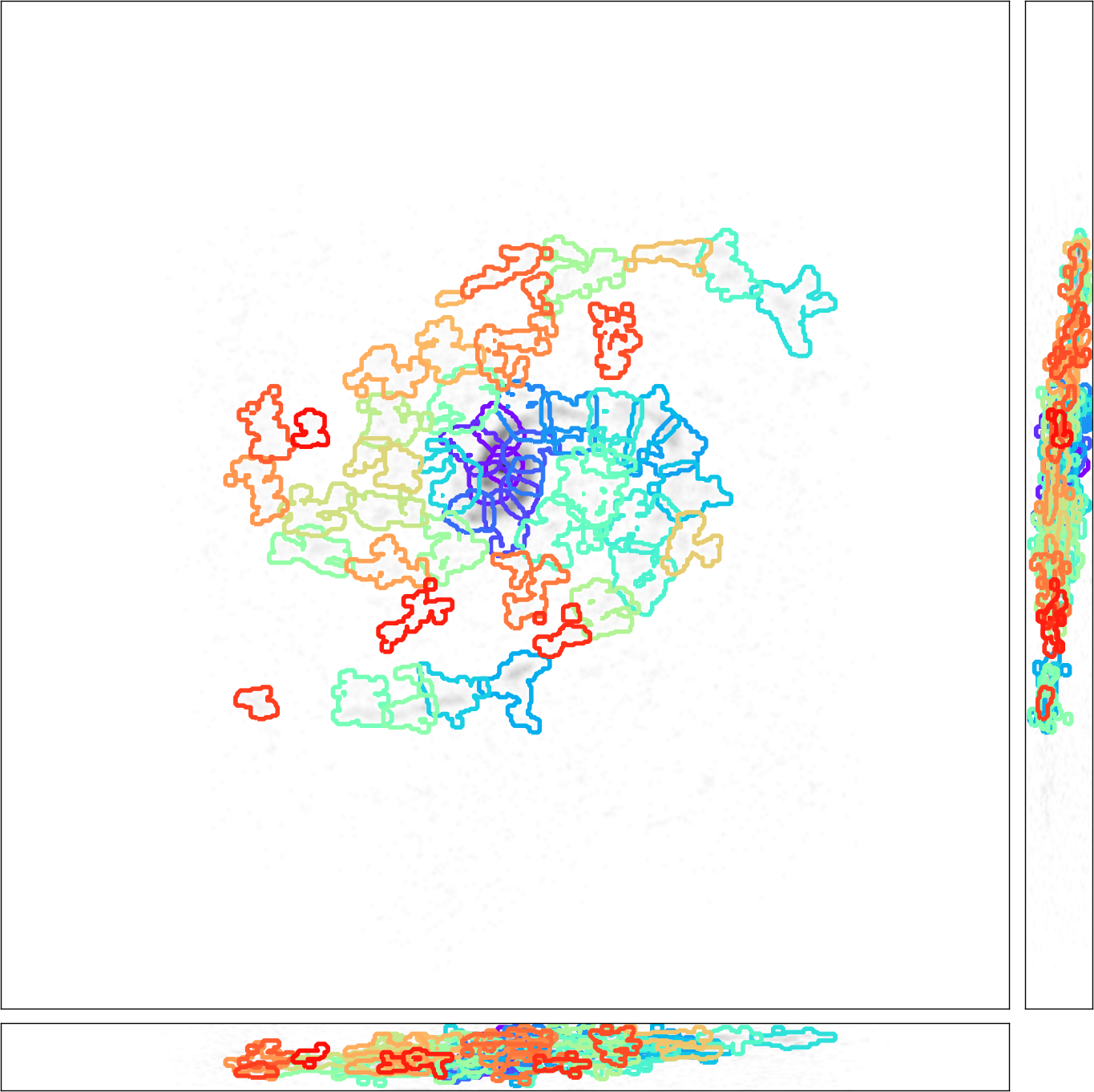}\\
\end{center}
  \caption{Clustering stages for M100. The left image 
  shows the projections of the clusters obtained by DBSCAN. Similarly, 
  the center image shows only the 10 brightest clusters, and the right 
  image shows the 50 clusters obtained by K-Means from the filtered data.}
  \label{fig:m100-clust}
\end{figure*}

\begin{figure}[tbp]
  \begin{center}
\includegraphics[width=0.85\linewidth]{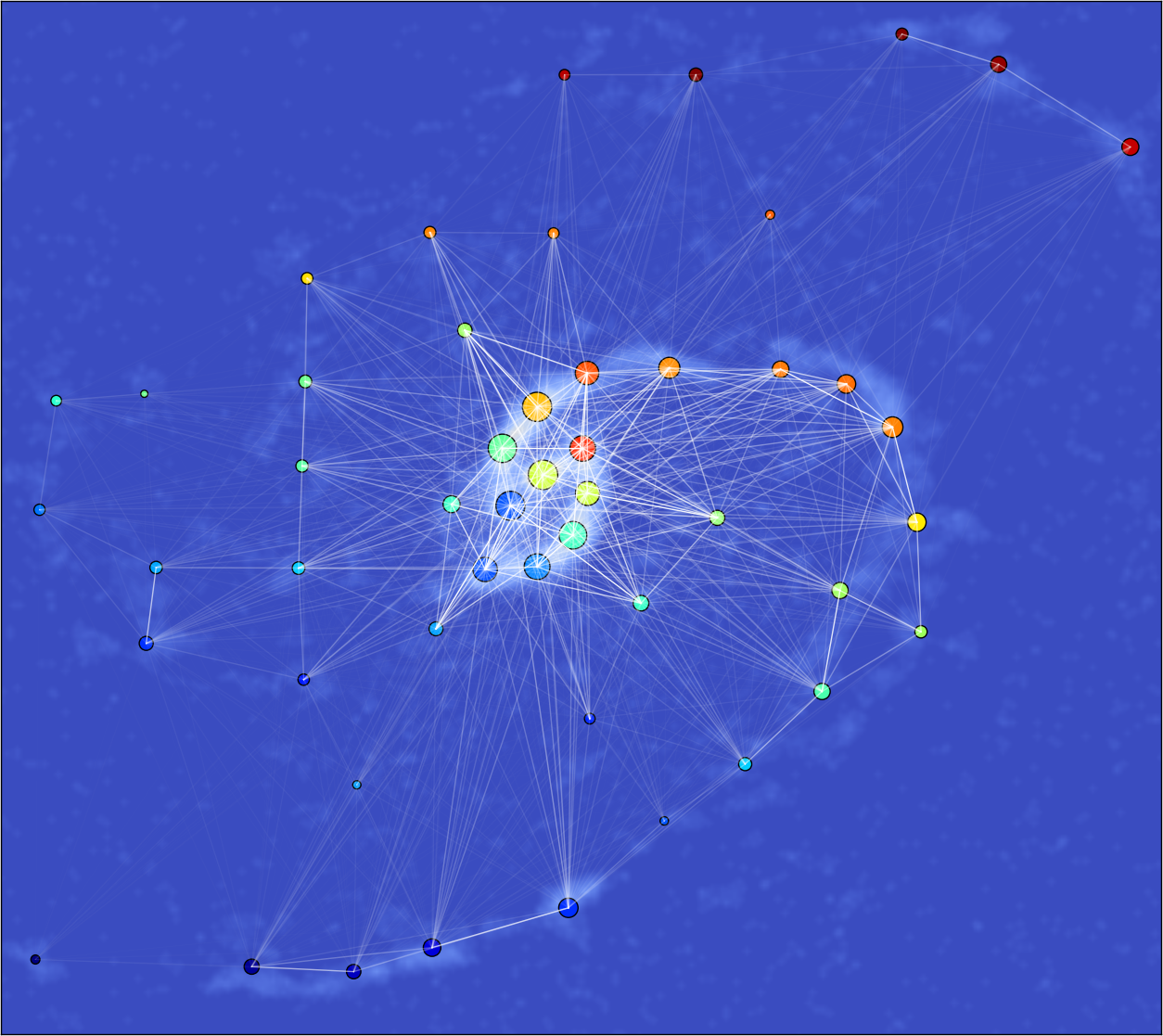}
\end{center}
  \caption{Schematic Graph of M100.  
  This figure shows the projected schematic graph representation of the galaxy,
  plotted over a heat-map version of the zero-\textit{th} moment.}
  \label{fig:m100-clust2}
\end{figure}

Clustering algorithms are not only useful to detect clumps or sources, but can
work as tools for filtering and summarizing data. For the M100 data, we 
use the non-parametric DBSCAN algorithm \citep{Ester96} to form clusters of neighboring points 
while discarding the unconnected ones ($\epsilon=7 [pix]$). In a second stage
we selected only the 10 brightest clusters to discard very small clusters.
At last, we use K-Means to summarize the data in
only 50 points. This process is shown in Figure~\ref{fig:m100-clust}, 
and the final result is reported in the schematic representation of Figure~\ref{fig:m100-clust2}.  In this representation the size of each node represents the flux, the color of each
node represents the radial velocity, and the edges correspond to a 
``gravity'' connection (i.e., $\propto  \frac{F_i F_j}{d^2}$) between nodes. The spiral structure and the velocity gradients of M100 can be easily observed in the figure. For example, this representation might help to seed a n-body simulation starting from real data.

%\begin{figure}[tbp]
%  \begin{center}
%\includegraphics[width=0.32\linewidth]{antennaen-clust-1-crop.png}
%\includegraphics[width=0.32\linewidth]{antennaen-clust-2-crop.png}
%\includegraphics[width=0.32\linewidth]{antennaen-clust-3-crop.png}\\
%\includegraphics[width=0.94\linewidth]{antennaen-clust-4-crop.png}
% \end{center}
%  \caption{Schematic graph of Antennae North.}
%  \label{fig:antennaen-clust}
%\end{figure}

%We have applied the same procedure for the
%Antennae North data. We have selected an $\epsilon=3 [pix]$ for DBSCAN, because
%the region of interest is much more compact. We kept 50 points for he KMeans 
%node creation. The results are summarized in
%Figure~\ref{fig:antennaen-clust}

\subsection{Gaussian Clumps and Parameter Estimation}
\begin{figure*}[tbp]
  \begin{center}
\includegraphics[width=0.45\linewidth]{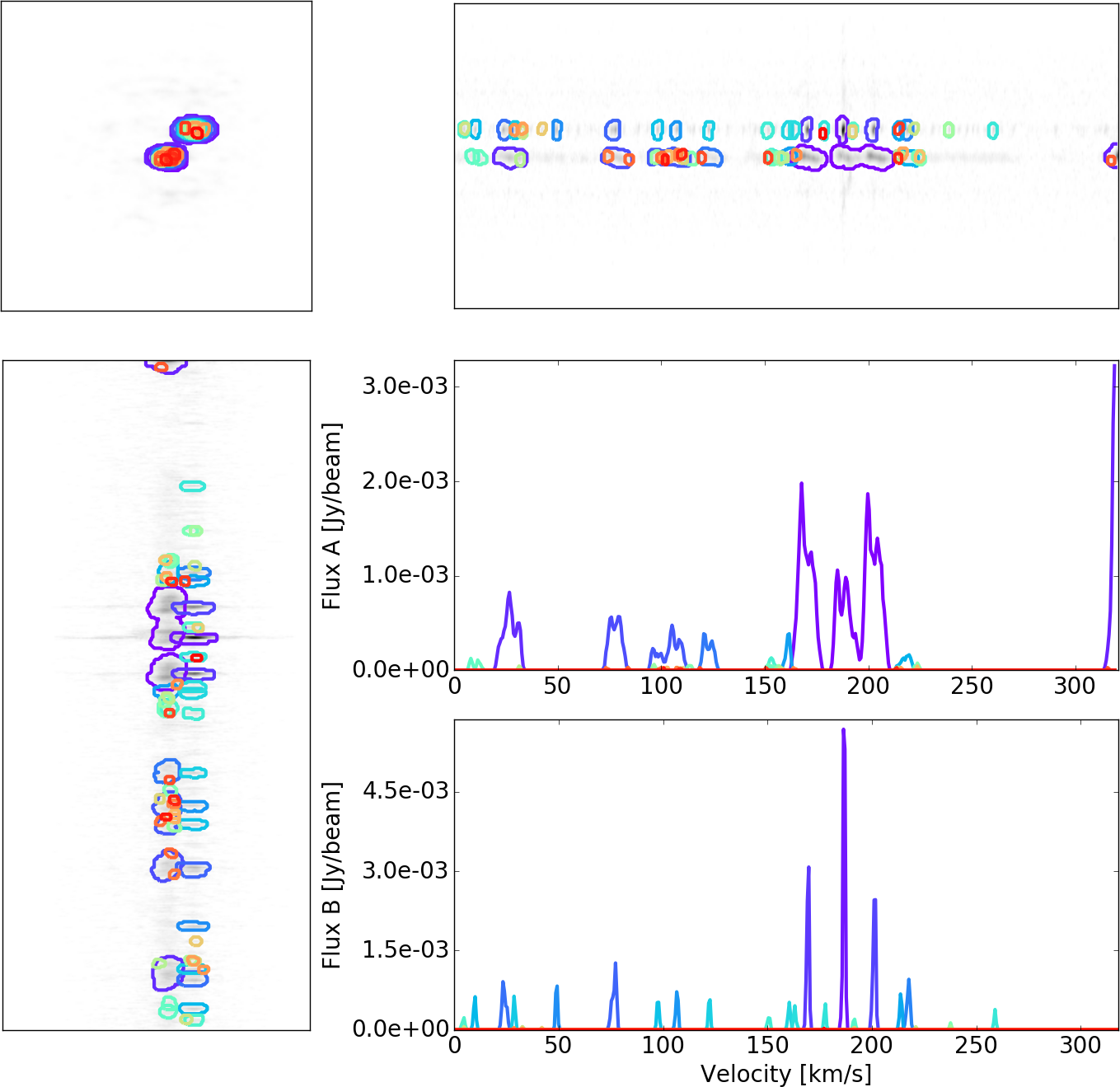}\hfill
\includegraphics[width=0.45\linewidth]{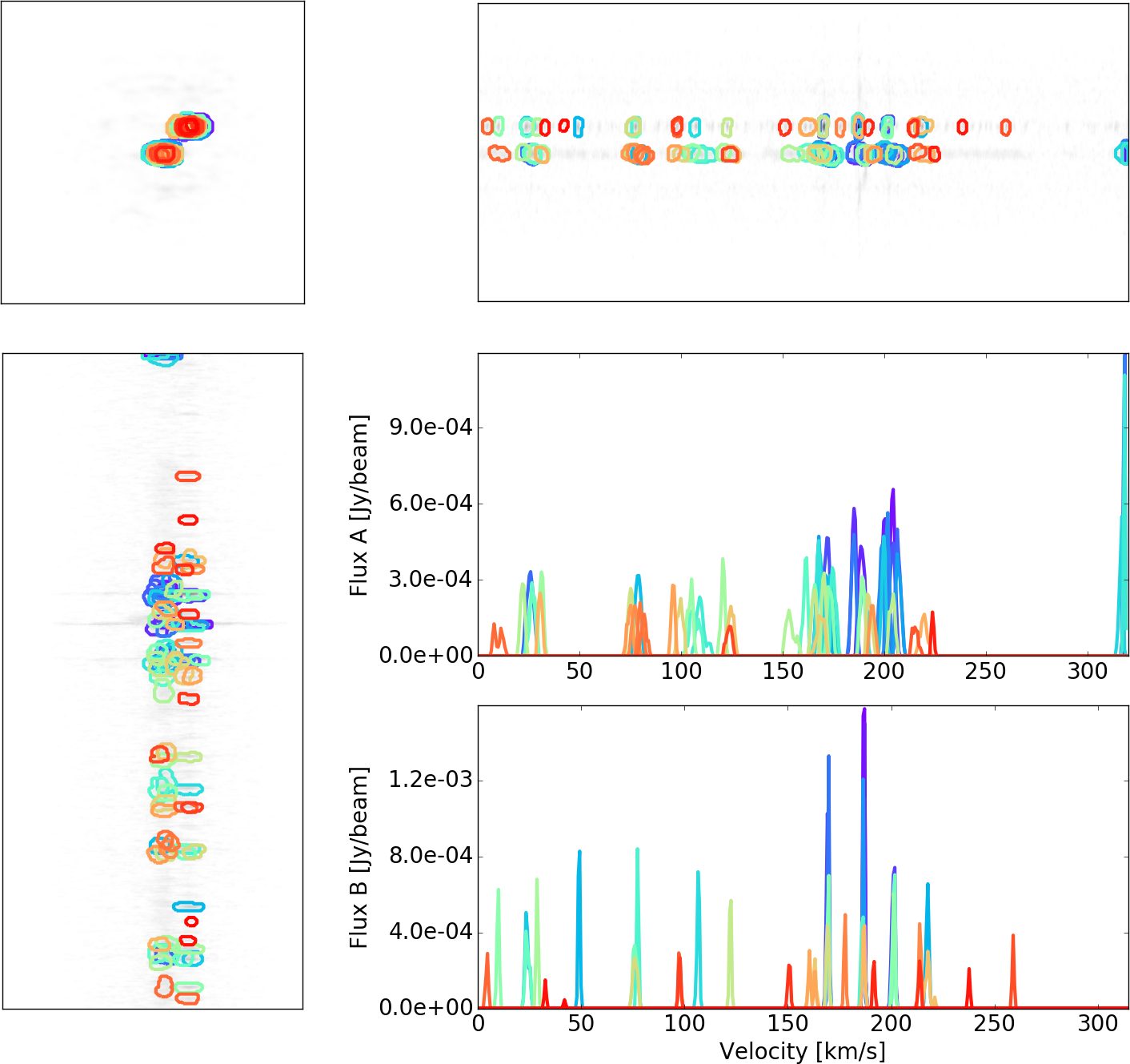}
 \end{center}
  \caption{IRAS16293-220GHz clustering results. The left figure shows the clustering result using DBSCAN and the right figure shows the clustering result using K-Means. The spectral lines are shown for sources A and B of the system separately.}
  \label{fig:iras-clust}
\end{figure*}

A key difference of the GaussClump algorithm \citep{stutzki90} compared to the pixel-based
ones is that each Gaussian clump is represented by a very compact set of parameters with 
astrophysical meaning. 
The main problem is that clumps are usually more complex
structures than Gaussians. Even though they can be efficiently represented
by a mixture, analyzing a complexly blended structure of several components can be a hard task.

However, there are some cases where independent clumps can be accurately
represented by Gaussians, like in the IRAS16293 data presented in Section~\ref{sec:emp}.
For this data we first run a DBSCAN with $\epsilon = 1.5 [pix]$ because each
component is very compact, discarding noise or very low-energy lines
in the data, as can be seen in Figure~\ref{fig:iras-clust}. Then we
use the Expectation Maximization algorithm \citep{Dempster77} to fit a Gaussian Mixture 
of 100 components. The results are shown in Figure~\ref{fig:iras-clust}.

Please note that the obtained Gaussian mixture is in the space of our homogeneous bubbles, and not in the pixel domain. 
However, as every bubble is a Gaussian function, we can use analytic expressions to compute the moment preserving parameters of each Gaussian cluster \citep{crouse11}.
Let us denote the position vector of the $i$-th bubble in the cluster as
$\mu_i$, $\Sigma_0$ be the bubble shape and $c$ be its integrated energy.
Then, the moments of the cluster $\Omega$ can be computed as indicated in Equation~\ref{eq:WCS}.

\begin{align*}
c_{\Omega} & = \sum_i c = c|\Omega| \\
\mu_\Omega & = \frac{\sum_i \mu_i}{|\Omega|}
\end{align*}

\begin{equation}
%\begin{align*}
%\Sigma_\Omega & = \Sigma_0 + \frac{\sum_i \mu_i \mu_i^\top}{|\Omega|}
\Sigma_\Omega = \Sigma_0 + \frac{\sum_i \mu_i \mu_i^\top}{|\Omega|}
%\end{align*}
\label{eq:WCS}
\end{equation}

Now we can compute the parameters used in GaussClumps by solving a determined system of linear equations.
The parameters are clump intensity (N), center position (RA, DEC and FREQ), spatial FWHM semiaxes (SA 1 and 2), 
spatial orientation (ANGLE), spectral FWHM, and the gradients $\partial
RA/\partial FREQ$ (GRA) and $\partial DEC/\partial FREQ$ (GDEC). 
As an example of these results, we report all the clumps (lines)
with a spectral FWHM larger than $3 [pix]$ in Table~\ref{tab:cres}.

\begin{table*}[tbp]
\normalsize
\centering
\begin{tabular}{l|cccccccccc}
ID & $|\Omega|$ & $\mu_{RA}$ & $\mu_{DEC}$ & $\mu_{FREQ}$ & SA1 & SA2 & ANGLE & FWHM &
GRA & GDEC\\
\hline
3 & 408 & 121.22 & 108.51 & 39.71 & 2.98 &
3.58 & 30.81 & 3.37 & -0.17 & -0.09 
\\
33 & 291 & 135.66 & 127.66 & 43.91 & 2.98&
5.26 & 0.13& 3.54& -0.09 &
-0.01 \\
66 & 291 & 112.84 & 106.62 & 258.88 & 2.75 &
3.36 & 51.16 & 3.03 & -0.15 & 0.16 
\\
87 & 252 & 116.95 & 108.46 & 290.49 & 2.79 &
3.63 & -20.11 & 3.09 & -0.12 &
-0.02 \\
51 & 222 & 117.49 & 108.27 & 231.05 & 2.61 &
3.49 & -8.95 & 4.37 & -0.18 &
-0.07 \\
44 & 192 & 114.39 & 108.20 & 158.50 & 2.74 & 3.33 &
7.45 & 3.05 & -0.05 & 0.09 \\
67 & 189 & 132.94 & 127.89 & 243.11 & 2.98 &
3.41 & -21.98 & 3.16 & 0.10 & 0.25 
\\
31 & 177 & 118.20 & 107.76 & 329.36 & 2.61 &
3.34 & 16.79 & 3.38 & -0.04 & 0.18
\\
26 & 135 & 118.40 & 108.44 & 15.07 & 2.70 &
2.97 & 6.32 & 4.08 & -0.41 & 0.29 
\\
\end{tabular}
\caption{Detected clumps with a spectral FWHM larger than $3[pix]$. The parameters of each Gaussian clump follows the \citet{stutzki90} model. For simplicity we have not
converted these results to WCS, so they are expressed in pixels.}
\label{tab:cres}
\end{table*}

\section{Conclusions}
\label{sec:conc}

We introduced a representation based on homogeneous volumes that enables data analysis techniques 
to be directly applied to spectroscopic cubes. The algorithm 
obtains this representation by subtracting homogeneous volumes from the original cube. 
Our experiments show that the proposal has a good behavior in terms of signal and noise separation. Even 
though we show examples only for 3D data, the technique could be used for any dimensionality 
by configuring an appropriate kernel. The algorithm proves to have some basic properties that can help to be accepted 
by the astronomical community (i.e., determinism, bounded iterations, positivity constrain, etc.) and 
produces a compact representation of the data ($\leq 1\%$) in a reasonable time, while maintaining 
enough information to perform analysis. Computational time and compactness depend on the target SNR that want to 
be achieved. Thus, we presented an heuristic to estimate this parameter.

In addition, we presented a few examples of usages of the representation, trying to cover different 
astronomical objects and techniques, showing the versatility of the proposed representation. These 
examples did take at most a few seconds of computation time in a general-purpose computer due to the compactness of the 
representation. 
%, which opens the opportunity for a more sensible parameter tuning compared to techniques working directly over pixel-based representations. 

%In summary, the proposed representation corresponds to a discretization of the image in terms of flux
%with a quantum value proportional to the $\sigma$ noise level,

\subsection{Future Work}

This work opens several research directions both in the theoretical and the practical aspects.
First of all, our results show that the presented bound is clearly very loose for the 
Gaussian kernel. We believe that a tighter bound can be found specifically for GHR. 
Formally, we have only 
shown that the desired SNR is reached, but no statistics on the residual cube are given.
We believe that the properties of the algorithm can be explored more deeply if information-theory 
measures (e.g. mutual information) are computed for the residual, which will require to produce 
reasonable synthetic data to compare to a ground-truth reference. Also from the theoretical point of 
view, the forced compact support of the Gaussian function will produce an accumulative error when the 
bubbles are used for estimation. Here, the compact support could be analytically forced by convolving 
the Gaussian kernel with an appropriate smooth step function in all directions. Another interesting 
follow up could be to study and improve the SNR estimation heuristic by using the current advances on 
background estimation. From a more practical 
point of view, the current article does not address the problem of propagating uncertainties. The measurements are 
usually accompanied by their uncertainties, and each data reduction process, such as the GHR 
representation or the Gaussian Mixture fitting, incorporates more uncertainty to the final data. 
This is a very interesting research line that it must be studied if our representation gains popularity. 

A more specific yet very interesting improvement can be made for interferometric data: the compact representation 
could be combined with the image synthesis procedure to reduce error propagation. Interferometric data cubes 
are synthetic data products obtained by an incomplete inverse transformation from a Fourier plane (visibilities). 
More precisely, the image synthesis  \citep{thompson08} is usually done by interpolating the non-uniform visibilities coordinates into a grid where the fast Fourier transform can be applied. Then the cube is improved by an iterative process called CLEAN 
\citep{hogbom74} that selects the brightest pixel, conducts a convolution with a unique functional component, 
and finally it subtracts it from the residual. This is very similar to the process we apply for obtaining homogeneous 
representations, so we suspect our algorithm can be merged with CLEAN to obtain a method that preserves 
the simplicity of CLEAN, adds the positivity constraint of our method, and produce a compact 
representation that can be used for analysis with less accumulative error.

We strongly believe that our representation can be used for boosting astronomical 
research, so we plan to use it in real-world science cases that needs advanced statistical analysis, 
such as emissions with very low SNR, too many or blended spectral lines, too many dimensions, very large 
mosaics, etc. Also, we plan to address the \emph{content-aware} data discovery problem: even though 
there are plenty of services that provide data discovery nowadays (e.g. VO services), almost all of 
them are based in the annotated metadata rather than in the image content. Our representation could 
provide a compact representation that is fast to analyze as we show in this article, allowing the 
astronomer to search for data that fulfill more complex properties than the ones declared in the 
metadata. For this application in particular, the kernel size could be increased producing compact 
representations with less precise information in order to cope with a large number of files.  

%%% The Appendices part is started with the command \appendix;
%%% appendix sections are then done as normal sections
%%% \appendix
%
\section{Acknowledgements}
\label{sec:ack}
This paper makes use of the following ALMA data: 
\begin{itemize}
\item ADS/JAO.ALMA\#2011.0.00001.SV
\item ADS/JAO.ALMA\#2011.0.00004.SV 
\item ADS/JAO.ALMA\#2011.0.00007.SV 
\item ADS/JAO.ALMA\#2011.0.00009.SV.
\end{itemize}
ALMA is a partnership of ESO (representing its member states), NSF (USA) and NINS (Japan), together with NRC (Canada), NSC and ASIAA (Taiwan), and KASI (Republic of Korea), in cooperation with the Republic of Chile. The Joint ALMA Observatory is operated by ESO, AUI/NRAO and NAOJ.
This work has been partially funded by CONICYT PIA/Basal FB0821,
CONICYT PIA/Basal FB0008,
and FONDEF IT 15I10041. \newline

%A. Bayo acknowledges financial support from the Proyecto Fondecyt Iniciaci\'on 11140572}
%
%%% If you have bibdatabase file and want bibtex to generate the
%%% bibitems, please use
%%%
\section*{References}

\bibliographystyle{elsarticle-harv} 
\bibliography{bubble}

\begin{thebibliography}{31}
\expandafter\ifx\csname natexlab\endcsname\relax\def\natexlab#1{#1}\fi
\expandafter\ifx\csname url\endcsname\relax
  \def\url#1{\texttt{#1}}\fi
\expandafter\ifx\csname urlprefix\endcsname\relax\def\urlprefix{URL }\fi

\bibitem[{Araya et~al.(2016)Araya, Candia, Gregorio, Mendoza, and
  Solar}]{mendoza16}
Araya, M., Candia, G., Gregorio, R., Mendoza, M., Solar, M., 2016. Indexing
  data cubes for content-based searches in radio astronomy. Astronomy and
  Computing 14, 23 -- 34.

\bibitem[{Berry(2015)}]{berry15}
Berry, D., 2015. {FellWalker} - a clump identification algorithm. Astronomy and
  Computing 10, 22--31.

\bibitem[{Berry et~al.(2007)Berry, Reinhold, Jenness, and
  Economou}]{berry2007cupid}
Berry, D., Reinhold, K., Jenness, T., Economou, F., 2007. Cupid: Clump
  identification and analysis package. Astrophysics Source Code Library 1,
  11007.

\bibitem[{{Bertin} and {Arnouts}(1996)}]{Bertin96}
{Bertin}, E., {Arnouts}, S., Jun. 1996. {SExtractor: Software for source
  extraction.} aaps 117, 393--404.

\bibitem[{Bishop(2007)}]{bishop06}
Bishop, C., 2007. Pattern Recognition and Machine Learning (Information Science
  and Statistics). Springer-Verlag New York, Inc., Secaucus, NJ, USA.

\bibitem[{Cheng(1995)}]{Cheng95}
Cheng, Y., Aug. 1995. Mean shift, mode seeking, and clustering. IEEE Trans.
  Pattern Anal. Mach. Intell. 17~(8), 790--799.

\bibitem[{Crouse et~al.(2011)Crouse, Willett, Pattipati, and
  Svensson}]{crouse11}
Crouse, D.~F., Willett, P., Pattipati, K., Svensson, L., July 2011. A look at
  gaussian mixture reduction algorithms. In: 14th International Conference on
  Information Fusion. pp. 1--8.

\bibitem[{Dempster et~al.(1977)Dempster, Laird, and Rubin}]{Dempster77}
Dempster, A., Laird, N., Rubin, D., 1977. Maximum likelihood from incomplete
  data via the em algorithm. Journal of the Royal Statistics Society 39, 1--38.

\bibitem[{Dewdney et~al.(2008)Dewdney, Lazio, Hall, and Schilizzi}]{dewdney08}
Dewdney, P., Lazio, T., Hall, P., Schilizzi, R., 2008. The square kilometer
  array (ska) radio telescope: Progress and technical directions. Radio Science
  Bulletin 326, 4--19.

\bibitem[{Donoho et~al.(2000)}]{donoho00}
Donoho, D., et~al., 2000. High-dimensional data analysis: The curses and
  blessings of dimensionality. AMS Math Challenges Lecture, 1--32.

\bibitem[{Ester et~al.(1996)Ester, Kriegel, Sander, and Xu}]{Ester96}
Ester, M., Kriegel, H.-P., Sander, J., Xu, X., 1996. A density-based algorithm
  for discovering clusters in large spatial databases with noise. In:
  Proceedings of the Second International Conference on Knowledge Discovery and
  Data Mining. AAAI Press, pp. 226--231.

\bibitem[{Gibson et~al.(2012)Gibson, Aigrain, Roberts, Evans, Osborne, and
  Pont}]{gibson12}
Gibson, N., Aigrain, S., Roberts, S., Evans, T., Osborne, M., Pont, F., 2012. A
  gaussian process framework for modelling instrumental systematics:
  application to transmission spectroscopy. Monthly notices of the royal
  astronomical society 419~(3), 2683--2694.

\bibitem[{{Hills}(2011)}]{science_verification}
{Hills}, R., Jan. 2011. {ALMA Science Verification}. ALMA Newsletter 7, 4--7.

\bibitem[{H\"{o}gbom(1974)}]{hogbom74}
H\"{o}gbom, J.~A., Jun. 1974. {Aperture Synthesis with a Non-Regular
  Distribution of Interferometer Baselines}. Astronomy and Astrophysics
  Supplement Series 15, 417--426.

\bibitem[{Ivezic et~al.(2009)Ivezic, Tyson, Acosta, Allsman, Anderson, Andrew,
  Angel, Axelrod, Barr, Becker, et~al.}]{ivezic08}
Ivezic, Z., Tyson, J., Acosta, E., Allsman, R., Anderson, S., Andrew, J.,
  Angel, R., Axelrod, T., Barr, J., Becker, A., et~al., Jan. 2009. {LSST}: from
  science drivers to reference design and anticipated data products. arXiv
  preprint arXiv:0805.2366 41, 366.

\bibitem[{Kato and Zerubia(2012)}]{kato12}
Kato, Z., Zerubia, J., 2012. Markov random fields in image segmentation. Now.

\bibitem[{Kohonen(1982)}]{kohonen82}
Kohonen, T., Jan 1982. Self-organized formation of topologically correct
  feature maps. Biological Cybernetics 43~(1), 59--69.

\bibitem[{Kozak et~al.(2015)Kozak, Wu, and Downs}]{kozak15}
Kozak, M., Wu, J., Downs, J., Aug.~20 2015. Systems and methods for optimizing
  n dimensional volume data for transmission. US Patent App. 14/625,587.

\bibitem[{McMullin et~al.(2007)McMullin, Waters, Schiebel, Young, and
  Golap}]{mcmullin07}
McMullin, J., Waters, B., Schiebel, D., Young, W., Golap, K., 2007. {CASA}
  architecture and applications. In: Astronomical data analysis software and
  systems XVI. Vol. 376. p. 127.

\bibitem[{Motwani et~al.(2004)Motwani, Gadiya, Motwani, and Harris}]{motwani04}
Motwani, M., Gadiya, M., Motwani, R., Harris, F., 2004. Survey of image
  denoising techniques. In: Proceedings of GSPX. pp. 27--30.

\bibitem[{Pedregosa et~al.(2011)Pedregosa, Varoquaux, Gramfort, Michel,
  Thirion, Grisel, Blondel, Prettenhofer, Weiss, Dubourg, Vanderplas, Passos,
  Cournapeau, Brucher, Perrot, and Duchesnay}]{scikit-learn}
Pedregosa, F., Varoquaux, G., Gramfort, A., Michel, V., Thirion, B., Grisel,
  O., Blondel, M., Prettenhofer, P., Weiss, R., Dubourg, V., Vanderplas, J.,
  Passos, A., Cournapeau, D., Brucher, M., Perrot, M., Duchesnay, E., 2011.
  Scikit-learn: Machine learning in {P}ython. Journal of Machine Learning
  Research 12, 2825--2830.

\bibitem[{Richards et~al.(2011)Richards, Starr, Butler, Bloom, Brewer,
  Crellin-Quick, Higgins, Kennedy, and Rischard}]{richards11}
Richards, J.~W., Starr, D.~L., Butler, N.~R., Bloom, J.~S., Brewer, J.~M.,
  Crellin-Quick, A., Higgins, J., Kennedy, R., Rischard, M., 2011. On
  machine-learned classification of variable stars with sparse and noisy
  time-series data. The Astrophysical Journal 733~(1), 10.

\bibitem[{Russ(2006)}]{russ06}
Russ, J., 2006. The Image Processing Handbook, Fifth Edition (Image Processing
  Handbook). CRC Press, Inc., Boca Raton, FL, USA.

\bibitem[{Shi and Malik(2000)}]{Shi00}
Shi, J., Malik, J., Aug 2000. Normalized cuts and image segmentation. IEEE
  Transactions on Pattern Analysis and Machine Intelligence 22~(8), 888--905.

\bibitem[{Stahler and Palla(2008)}]{stahler2008}
Stahler, S., Palla, F., 2008. The Formation of Stars. Wiley.

\bibitem[{Starck and Murtagh(2002)}]{starck02}
Starck, J., Murtagh, F., 2002. Astronomical Image and Data Analysis. Springer.

\bibitem[{Stutzki and Guesten(1990)}]{stutzki90}
Stutzki, J., Guesten, R., 1990. High spatial resolution isotopic co and cs
  observations of m17 sw-the clumpy structure of the molecular cloud core. The
  Astrophysical Journal 356, 513--533.

\bibitem[{{Testi} et~al.(2010){Testi}, {Schilke}, and {Brogan}}]{testi10}
{Testi}, L., {Schilke}, P., {Brogan}, C., Mar. 2010. {Report on the Workshop
  Data Needs for ALMA From Data Cubes to Science: Ancillary Data and Advanced
  Tools for ALMA}. The Messenger 139, 53--55.

\bibitem[{Thompson et~al.(2008)Thompson, Moran, and Swenson~Jr}]{thompson08}
Thompson, A.~R., Moran, J.~M., Swenson~Jr, G.~W., 2008. Interferometry and
  synthesis in radio astronomy. John Wiley \& Sons.

\bibitem[{{Vanderplas} et~al.(2012){Vanderplas}, {Connolly}, {Ivezi{\'c}}, and
  {Gray}}]{astroML}
{Vanderplas}, J., {Connolly}, A., {Ivezi{\'c}}, {\v Z}., {Gray}, A., oct. 2012.
  Introduction to astroml: Machine learning for astrophysics. In: Conference on
  Intelligent Data Understanding (CIDU). pp. 47 --54.

\bibitem[{Williams et~al.(1994)Williams, De~Geus, and Blitz}]{williams94}
Williams, J., De~Geus, E., Blitz, L., 1994. Determining structure in molecular
  clouds. The Astrophysical Journal 428, 693--712.

\end{thebibliography}
%
%%% else use the following coding to input the bibitems directly in the
%%% TeX file.
%
%\begin{thebibliography}{00}
%
%%% \bibitem[Author(year)]{label}
%%% Text of bibliographic item
%
%\bibitem[ ()]{}
%
%\end{thebibliography}
%
%\appendix 

%\section{Proof of Theorem 1 (Upper bounded iterations)}
%\label{app:theorem}
%\newtheorem{repeatme}{Theorem \ref{th:1}}
%\begin{repeatme}
%Let $F_\tau$ and $n_\tau$ be both the integrated flux and the number of pixels with %intensity greater than 
%$\tau$ respectively. The number of steps (and solution size) $n_\textsc{BD}$ of the %\textsc{Bubble-Detect} algorithm 
%is finite, deterministic and upper bounded by $\frac{F_\tau - n_\tau(\tau - \sigma)}%{\sigma}$.
%\end{repeatme}

\end{document}